\newtheorem{theorem}{Theorem}
\newtheorem{lemma}{Lemma}
\newtheorem{remark}{Remark}
\newtheorem{proposition}{Proposition}
\newtheorem{problem}{Problem}
\newcommand{\sr}{\stackrel}
\newcommand{\rar}{\rightarrow}
\newcommand{\tri}{\sr{\triangle}{=}}
\newcommand{\be}{\begin{equation}}
\newcommand{\ee}{\end{equation}}
\newcommand{\bea}{\begin{eqnarray}}
\newcommand{\eea}{\end{eqnarray}}
\newcommand{\bes}{\begin{eqnarray*}}
\newcommand{\ees}{\end{eqnarray*}}
\newcommand{\bi}{\begin{itemize}}
\newcommand{\ei}{\end{itemize}}
\newcommand{\ben}{\begin{enumerate}}
\newcommand{\een}{\end{enumerate}}
\newcommand{\bp}{\begin{problem}}
\newcommand{\ep}{\end{problem}}
\newcommand{\hso}{\hspace{.1in}}
\newcommand{\hst}{\hspace{.2in}}
\newcommand{\noi}{\noindent}
\newcommand{\bc}{\begin{center}}
\newcommand{\ec}{\end{center}}
\begin{document}
%
\title{Optimal Merging Algorithms for Lossless Codes with Generalized Criteria\footnote{Parts of preliminary results of this work have been published by the authors in \cite{2011:charalambousISIT}.} }


\author{ Themistoklis Charalambous,~\IEEEmembership{Member,~IEEE,} Charalambos D. Charalambous,~\IEEEmembership{Senior Member,~IEEE,} and Farzad Rezaei,~\IEEEmembership{Member,~IEEE}
\thanks{T. Charalambous and C.D. Charalambous are with the Department of Electrical and Computer Engineering, University of Cyprus, Nicosia 1678 (E-mail: \{themis, chadcha\}@ucy.ac.cy).}
\thanks{Farzad Rezaei is an alumni of the School of Information Technology and Engineering, University Ottawa, Ottawa, Canada (E-mail:  frezaei@alumni.uottawa.ca).}
}

\maketitle

%
%
%
%
\begin{abstract}
  This paper presents lossless prefix codes optimized with respect to a pay-off criterion consisting of a convex combination of maximum codeword length and average codeword length. The optimal codeword lengths obtained are based on a new coding algorithm which transforms the  initial source probability vector into a new probability vector according to a merging rule. The coding algorithm is equivalent to a partition of the source alphabet into disjoint sets on which a new transformed probability vector is defined as a function of the initial source probability vector and a scalar parameter. The pay-off criterion considered encompasses a trade-off between maximum and average codeword length; it is related to a  pay-off criterion consisting of a convex combination of average codeword length and average of an exponential function of the codeword length, and to an average codeword length pay-off criterion subject to a limited length constraint. A special case of the first related pay-off is connected to coding problems involving source probability uncertainty and codeword overflow probability, while the second related pay-off compliments limited length Huffman coding algorithms.
\end{abstract}

\IEEEpeerreviewmaketitle

%
%
%
%
\section{Introduction}\label{intro}


\noi Lossless fixed to variable length source codes are usually examined under known source  probability distributions, and unknown source probability distributions. For known source probability distributions there is an extensive literature which aims at minimizing various pay-offs such as the average  codeword length \cite[Section 5.3]{2006:Cover}, the average redundancy of the codeword length \cite{2004:DrmotaSzpankowski,2008a:Baer}, the average of an exponential function of the codeword length    \cite{1965:campbell_coding,1981:humblet_generalization,2008b:Baer}, the average of an exponential function of the redundancy of the codeword length \cite{2006a:Baer,2008a:Baer, 2008b:Baer}, and the probability of codeword length overflow \cite{1967:Jelinek,1991:Merhav}.  On the other hand, universal coding and universal modeling, and the so-called Minimum Description Length (MDL) principle  are often examined via minimax techniques, when the source probability distribution is unknown, but belongs to a pre-specified  class of source distributions  \cite{1973:davisson,1980:davisson_Leon-Garcia, 2004:DrmotaSzpankowski,2003:JacquetSzpankowski,2009:Gawrychowski_Gagie}. With respect to the above pay-offs Shannon codes find  sub-optimal code lengths by treating them as real numbers, while Huffman codes  find the optimal  code lengths by treating them as integers. Coding algorithms for general pay-off criteria involving pointwise redundancy, average exponential redundancy, and maximum pointwise redundancy are found in \cite{20011:Baer}.


\noi The main objectives of this paper are to introduce a new pay-off criterion consisting of a convex combination of the maximum codeword and average codeword length,  to derive lossless prefix codes,  to discuss the implication of these codes to variable length coding applications, and to identify relations of the new pay-off to other pay-offs addressed in the literature. The criterion considered incorporates a trade-off between average codeword length and maximum codeword length, which makes the new coding algorithm  suitable for length sensitive coding applications. It  is general enough to  encompass as a special case some of the pay-off criteria investigated in the literature, such as limited-length coding \cite{2010:golinZhang} and  coding for exponential functions of the codeword length, while it is easily generalized to  universal coding in which the source probability vector belongs to a class.  \\
The new pay-off criterion  considered is discussed under  Problem~\ref{problem1} of Section~\ref{sec:statement},  while its  connections to other pay-off criteria such as limited-length codes and codes obtained via convex combination of average and exponential function of the codeword length are discussed in Sections~\ref{limitedLength} and~\ref{limit}, respectively. Extensions of the new pay-off to universal codes is discussed in Section~\ref{sec:universal}.

%
%
%
%
\subsection{Problem Formulation and Discussion of Results}\label{sec:statement}

\noi Consider a source with alphabet  ${\cal X } \tri \{x_1, x_2, \ldots, x_{ | {\cal X }| }  \}$  of cardinality   $|{\cal X}|$, generating symbols according to  the probability distribution ${\bf p} \tri \{p(x): x \in {\cal X}  \} \equiv \left(p(x_1), p(x_2), \ldots, p(x_{|{\cal X}|})\right)$. Source symbols are encoded into $D-$ary codewords. A code ${\cal C} \tri \{c(x): x \in {\cal X}\}$ for symbols in ${\cal X}$ with image alphabet ${\cal D} \tri \{0, 1, 2, \ldots, D-1 \}$ is an injective map    $c: {\cal X} \rar {\cal D}^*$, where ${\cal D}^*$ is the set of finite sequences drawn from ${\cal D}$.  For $x \in {\cal X}$  each codeword $c(x) \in {\cal D}^*,  c \in {\cal C}$ is identified with a codeword length $l(x) \in {\mathbb Z}_+$, where ${\mathbb Z}_+$ is the set of non-negative integers. Thus, a  code ${\cal C}$ for source symbols from the alphabet ${\cal X}$ is associated with the length function of the code $ l : {\cal X} \rar {\mathbb Z}_+$, and a code defines a codeword length vector ${\bf l} \tri \{ l(x): x \in {\cal X}\} \equiv \big(l(x_1), l(x_2), \ldots, l(x_{|{\cal X}|})\big) \in {\mathbb Z}_+^{|{\cal X}|}$.
Since a function $l : {\cal X} \rar {\mathbb Z}_+$ is the length function of some prefix code if, and only if, the Kraft inequality holds \cite[Section 5.2]{2006:Cover}, then the admissible set of codeword length vectors is defined by
${\cal L}\left( {\mathbb Z}_+^{|{\cal X}|}  \right)  \tri \Big\{ {\bf l} \in {\mathbb Z}_+^{|{\cal X}|} : \sum_{x \in {\cal X}} D^{-l(x)} \leq 1 \Big\}.$
On the other hand, if the integer constraint is relaxed by admitting real-valued length vectors ${\bf l} \in {\mathbb R}_+^{|{\cal X}|}$, which satisfy the Kraft inequality, such as Shannon codes or arithmetic codes, then ${\cal L}\left( {\mathbb Z}_+^{|{\cal X}|} \right)$ is replaced by
\bes
{\cal L}\left( {\mathbb R}_+^{|{\cal X}|} \right)  \tri \Big\{ {\bf l} \in {\mathbb R}_+^{|{\cal X}|} : \sum_{x \in {\cal X}} D^{-l(x)} \leq 1 \Big\}.
\ees
Such codes are useful in obtaining approximate solutions which are less computationally intensive \cite[Section 5.3]{2006:Cover}.
Without loss of generality, it is assumed that the set of probability distributions is defined by
\begin{align*}
{\mathbb P}({\cal X})& \tri \Big\{ {\bf p} =\Big( p(x_1), \ldots, p(x_{|{\cal X}|})\Big) \in {\mathbb R}_+^{|{\cal X}|} : \\
&p(x_{|{\cal X}|})>0, \hso p(x_i) \leq p(x_j), \forall i>j,  (x_i,x_j) \in {\cal X},  \sum_{x \in {\cal X}} p(x) =1 \Big\}.
\end{align*}

\noi Unless specified otherwise, the following notation is used:  $\log (\cdot) \tri \log_D (\cdot)$, and ${\mathbb H}({\bf p})$ is the entropy of the probability distribution ${\mathbf p}$.

\noi The main  pay-off considered  is  a convex combination of the maximum codeword length and the average codeword length.  Specifically,  a parameter $\alpha \in [0,1]$ is introduced which weights the maximum codeword length, while $(1-\alpha)$ weights the average codeword length, and as this parameter moves away from $\alpha=0$, more weight is put on reducing the maximum codeword length, thus the maximum length of the code is reduced resulting in a more balanced code tree. Such pay-off is particularly important in applications where the codeword lengths are bounded by a specific constant. The main  problem investigated  is stated below.

\begin{problem}\label{problem1}
Given a known source probability vector ${\bf p} \in   {\mathbb P}({\cal X})$ and weighting parameter $\alpha \in [0,1] $, find a prefix code length vector ${\bf l}^* \in {\mathbb R}_+^{|{\cal X}|}$ which minimizes the Maximum and Average Length pay-off ${\mathbb L}_{\alpha}({\bf l}, {\bf p})$ defined by
\bea
{\mathbb L}_{\alpha}({\bf l}, {\bf p}) \tri \Big\{ \alpha  \| {\bf l} \|_\infty+ (1-\alpha) \sum_{ x \in {\cal X}} l(x) p(x)\Big\}, \quad \| {\bf l} \|_\infty \tri \max_{ x\in {\cal X}} l(x).  \label{b30}
\eea
\end{problem}

\noi The presence of the $\ell_\infty$ norm (e.g., $||l||_{\infty}$) in the pay-off ${\mathbb L}_{\alpha}({\bf l}, {\bf p})$ makes the characterization of the optimal real-valued prefix code, which is parametrically dependent on $\alpha \in [0,1]$,  very different from previously known Shannon type codes. Indeed, it is shown in subsequent sections that  the optimal code corresponding to Problem~\ref{problem1}  is equivalent to a specific partition of the source alphabet, and re-normalization  and merging of entries of the initial source probability vector,  as a function  of the parameter $\alpha \in [0,1]$, from which the optimal code is derived.  The single letter performance of the optimal codeword lengths  $\{l^\dagger(x): x \in {\cal X}\}$ satisfy ${\mathbb H}({\bf w}_\alpha) \leq {\mathbb L}_{\alpha}({\bf l}^\dagger, {\bf p})
< {\mathbb H}({\bf w}_\alpha) +1$, where ${\bf w}_\alpha \tri \{ w_\alpha(x): x \in {\cal X}\}$ is a new probability vector which depends on the initial source probability vector and the parameter $\alpha \in [0,1]$. As $\alpha \in [0,1]$ increases the optimal code tree moves towards the direction of a more balanced code tree while there is an $\alpha_{max} \in [0,1]$ which is the minimum value beyond which there is no compression. \\
An algorithm is presented which computes the weight vector ${\bf w}_\alpha$ via partitioning of the source alphabet,  re-normalization and merging of the initial source probability vector, for any value of $\alpha \in [0,1]$, having a worst case computational complexity of order $\mathcal{O}(n)$. \\
Problem \ref{problem1}, as suggested by one of the reviewers, can also be solved in a waterfilling-like fashion. For completeness and direct comparison with the methodology suggested in this paper, the solution to Problem \ref{problem1} is included in Appendix \ref{AppendixB}.

\subsection{Relations to Literature}
In Section~\ref{limitedLength} it is shown that  limited-length coding problems defined  by  minimizing the average codeword length subject to a maximum codeword length constraint (Problem~\ref{problem3a}) are deduced from the solution of Problem \ref{problem1} as a special case. This connection provides Shannon type codes, and compliments the recent work on limited-length Huffman codes \cite{2010:golinZhang}. Specifically, given a hard constraint $L_{\lim} \in [1,\infty)$, the problem of finding a prefix code length vector ${\bf l}^* \in {\mathbb R}_+^{|{\cal X}|}$ which minimizes the Average Length Subject to Maximum Length Constraint pay-off ${\mathbb L}({\bf l}, {\bf p})$ defined by
\begin{subequations}\label{eq:problem3a}
\begin{align}
& \mathbb{ L}({\bf l}, {\bf p})  \tri \sum_{x\in {\cal X}} l(x) p(x),  \label{j3a}\\
& \text{subject to} \quad  \max_{x\in {\cal X}} l(x) \leq L_{\lim}. \label{j3b}
\end{align}
\end{subequations}
for all $ \alpha \in [0,1]$ is obtained from the solution of Problem \ref{problem1}. The connection is established  by introducing a real-valued Lagrange multiplier $\mu$ associated with the constraint on the maximum length, via  the unconstrained pay-off  defined by
\begin{align} \label{Lag_1}
 {\mathbb L}({\bf l}, {\bf p}, \mu) & \tri \sum_{x \in {\cal X}} l(x) p(x) +  \mu(\max_{x\in {\cal X}} l(x)- L_{\lim}), \quad \mu>0 \nonumber \\
 & =\mu \max_{x\in {\cal X}} l(x) + \sum_{x \in {\cal X}} l(x) p(x) -\mu L_{\lim}.
\end{align}
Hence, the optimal code for limited-length codes is obtained from the optimal code solution of  Problem 1, by substituting $\mu=\alpha/(1-\alpha)$, and then relating the value of the  Lagrange multiplier with a specific value of $\alpha$ for which the codeword lengths will be limited by $L_{\lim}$. The complete characterization of the solution to such problems is given in  Section~\ref{limitedLength}, which also includes an algorithm.\\

\noi In Section~\ref{limit} it is shown that Problem~\ref{problem1} is also related to a general-pay off consisting of a convex combination of the average codeword length  and average of an exponential function of codeword length  (Problem~\ref{problem2}) defined by
 \begin{align}
{\mathbb L}_{t,\alpha}({\bf l}, {\bf p}) &  \tri \frac{\alpha}{t} \log \Big(\sum_{ x \in {\cal X}} p(x) D^{t \l(x)}\Big) + (1-\alpha) \sum_{ x \in {\cal X}} l(x) p(x), \label{gcc}
\end{align}
 where $t \in (-\infty,\infty)$ is another parameter. Specifically, by noticing that  $\frac{1}{t}\log \sum_{ x \in {\cal X}} p(x) D^{t l(x)}$ is a nondecreasing function of $t \in [0,\infty)$, and $\lim_{t \rar \infty} \frac{1}{t}\log \sum_{ x \in {\cal X}} p(x) D^{t l(x)}=\max_{ x \in {\cal X}} l(x)$, then by replacing  $\alpha \max_{ x \in {\cal X}} l(x)$ in ${\mathbb L}_{\alpha}({\bf l}, {\bf p})$,  by the function $\frac{\alpha}{t} \log \Big(\sum_{ x \in {\cal X}} p(x) D^{t \l(x)}\Big)$, the resulting pay-off takes into account moderate values below  $\max_{ x \in {\cal X}} l(x)$, obtaining a two-parameter pay-off (\ref{gcc}). The pay-off ${\mathbb L}_{t,\alpha}({\bf l}, {\bf p})$ is  a convex combination of the average of an  exponential function of the codeword length, and the average codeword length. The case $\alpha=1$ is investigated in \cite{1965:campbell_coding,1981:humblet_generalization,2006a:Baer,2008a:Baer, 2008b:Baer,1991:Merhav}, where relations to minimizing  buffer overflow probability are discussed. Further, it is not difficult to verify that ${\mathbb L}_{t,\alpha}|_{\alpha=1}({\bf l}, {\bf p})$ is also the dual problem of universal coding problems, formulated as a minimax, in which the maximization is over  a class of probability distributions which satisfy a relative entropy constraint with respect to a given fixed nominal probability distribution  \cite{2005:rezaei_bambos,2009:Gawrychowski_Gagie}. Hence, the pay-off ${\mathbb L}_{t,\alpha}|_{\alpha=1}({\bf l}, {\bf p})$ encompasses a trade-off between universal codes and buffer overflow probability  and average codeword length codes.  Since the pay-off ${\mathbb L}_{t,\alpha}({\bf l}, {\bf p})$ is  in the limit, as $t \rar \infty$, equivalent to $\lim_{t \rar \infty} {\mathbb L}_{t,\alpha}({\bf l}, {\bf p})  ={\mathbb L}_{\alpha}({\bf l}, {\bf p})$, $\forall \alpha \in [0,1]$, then the codeword length vector minimizing ${\mathbb L}_{t,\alpha}({\bf l}, {\bf p})$ is expected to converge in the limit as $t \rar \infty$, to that which minimizes ${\mathbb L}_{\alpha}({\bf l}, {\bf p})$. However, moderate values of $t \in [0,\infty)$ are also of interest since the pay-off ${\mathbb L}_{t,\alpha}({\bf l}, {\bf p})$ can be interpreted as a trade-off between universal codes and average length codes.\\

 \noi Finally, in Section~\ref{sec:universal} it is demonstrated that the optimal codes obtained for the new pay-off can be used to solve universal coding problems, formulated as minimax problems,  in which the initial source probability vectors belongs to a specified family of probability vectors ${\mathbb S}({\cal X}) \subset {\mathbb P}({\cal X})$, with respect to the pay-off
 \bea
 {\mathbb L}_{\alpha}^+({\bf l}, {\bf p}) \tri  \max_{  {\bf p} \in   {\mathbb S}({\cal X})}  \Big\{ \alpha \max_{ x\in {\cal X}} l(x) + (1-\alpha) \sum_{ x \in {\cal X}} l(x) p(x)\Big\}. \label{uc}
\eea



\noi The rest of the paper is organized as follows. Section~\ref{sec:weights} addresses Problem~\ref{problem1} and derives basic results concerning the partition of the source alphabet, the re-normalization and merging rule as $\alpha$ ranges over $[0,1]$. Here, an algorithm is presented which describes how the partition of the source alphabet is characterized. Section~\ref{sec:lengths} gives the complete characterization of optimal codes corresponding to Problem~\ref{problem1}, the associated coding theorem, and relations to limited-length coding problems (Problem~\ref{problem3a}), and coding problems with general-pay off consisting of a convex combination of the average codeword length and average of an exponential function of codeword length  (Problem~\ref{problem2}).  Section~\ref{sec:examples} provides illustrative examples. Finally, Section~\ref{sec:conclusions} presents the conclusions and identifies open problems for future research.

%
%
%
%

\section{ Optimal Weights and Merging Rule}\label{sec:weights}

The main objective of this section is to convert the pay-off  of Problem~\ref{problem1}  into an equivalent  objective of the form $\sum_{x \in {\cal X}} w_\alpha(x) l(x) $,  where the new weights  ${\bf w}_\alpha \tri \{w_\alpha(x): x \in {\cal X}\}$  depend parametrically on $\alpha \in [0,1]$. Subsequently, we derive certain properties of the new weight vector as a function of the initial source probability vector and $\alpha \in [0,1]$,  and identify how these properties are transformed into equivalent properties for the optimal codeword length vector. The main issue here is to identify how  symbols are merged  together, and how the merging  changes as a function of the parameter $\alpha \in [0,1]$ and initial source probability vector, so that the optimal solution is characterized for all $\alpha \in [0,1]$. From these properties the optimal real-valued  codeword lengths for Problem~\ref{problem1} will be found. This merging will also provide insight in  characterizing  optimal codes for related problems (with different pay-offs).

\noi Define
\bea
\displaystyle l^* \tri \max_{ x \in {\cal X}} l(x), \hst \mathcal{U} \tri \Big\{x \in {\cal X}: l(x) = l^* \Big\} \label{n1} .
\eea
\noi The pay-off ${\mathbb L}_\alpha({\bf l}, {\bf p})$ can be written as
\begin{align}
 {\mathbb L}_\alpha({\bf l}, {\bf p}) & = \alpha l^*+(1-\alpha) \sum_{x \in {\cal X}} l(x) p(x) =\Big(\alpha+(1-\alpha)\sum_{x \in \mathcal{U}} p(x) \Big) l^*+ \sum_{x \in {\cal U}^c }(1-\alpha) p(x) l(x),  \label{n2}
 \end{align}
which makes the dependence on the disjoint sets ${\cal U}$ and  ${\cal U}^c \tri {\cal X} \setminus{\cal U}$ explicit. The set ${\cal U}$ remains to be identified so that a solution to the coding problem exists for all $\alpha \in [0,1]$.\\
Note that $l^* \equiv l^*(\alpha)$ and ${\cal U}\equiv {\cal U}(\alpha)$, that is, both the maximum length and the set of source symbols which correspond to the maximum length depend   parametrically on $\alpha \in [0,1]$. This explicit dependence will often be omitted for simplicity of notation. \\
Define
\begin{align}
\sum_{x \in {\mathcal U} }  w_{\alpha}(x)= \Big(\alpha+(1-\alpha)\sum_{x \in \mathcal{U}} p(x) \Big), \hst
w_{\alpha}(x)&=(1-\alpha) p(x),~ x \in {\cal U}^c.  \label{n3}
 \end{align}
Using (\ref{n2}) and (\ref{n3}) the pay-off ${\mathbb L}_\alpha({\bf l}, {\bf p})$ is written as a function of the new weight vector as follows:
\bea
{\mathbb L}_\alpha({\bf l}, {\bf p})\equiv {\mathbb L}({\bf l}, {\bf w}_\alpha) \tri \sum_{x \in {\cal X}}^{} w_\alpha(x) l(x), \hst \forall \alpha \in [0,1]. \label{n4}
\eea
The new weight vector ${\bf w}_{\alpha}$ is a function of $\alpha$ and the source probability vector ${\bf p} \in {\mathbb P}({\cal X})$, and it is defined over the two disjoint sets ${\cal U}$ and  ${\cal U}^c$. It can be easily verified that $0\leq w_{\alpha}(x) \leq 1,~\forall x \in {\cal U}^c$ and $\sum_{ x \in {\cal X} }w_{\alpha}(x) =1,  \forall \alpha \in [0,1]$. However, at this stage it cannot be verified that  $w_\alpha(x)\geq 0, \forall x \in {\cal U}$.

\noi The next lemma finds the optimal codeword length vector.

\begin{lemma}\label{lemma_prelimin}
 The real-valued prefix codes minimizing pay-off ${\mathbb L}_\alpha({\bf l}, {\bf p})$ for $\alpha \in [0,1)$ are given by
\bea \label{eq:solution1}
l^\dagger(x) = \left\{ \begin{array}{ll}
-\log{\Big((1-\alpha)p(x)\Big)}=-\log w_\alpha^\dagger(x),  &  x \in {\cal U}^c  \\
-\log{\Big( \frac{\alpha \sum_{ x \in {\cal U}^c }p(x) + \sum_{x \in \mathcal{U}}p(x)}{  |\mathcal{U}|} \Big)}=-\log w_\alpha^\dagger(x), &  x \in {\cal U} \end{array} \right.
\eea
where ${\cal U}$ and ${\cal U}^c$  remain to be identified. Note that for $\alpha =1$, ${\cal X}= {\cal U}$ and $l^\dagger(x)=\log |{\cal X}|, \forall x \in {\cal X}$.
\end{lemma}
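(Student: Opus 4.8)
The plan is to treat the partition $\{\mathcal{U},\mathcal{U}^c\}$ as fixed and solve the resulting convex program, deferring the identification of $\mathcal{U}$ to the subsequent development. Starting from the representation (\ref{n4}), $\mathbb{L}_\alpha(\mathbf{l},\mathbf{p})=\sum_{x\in\mathcal{X}}w_\alpha(x)l(x)$, I would first record the two structural facts that drive everything: by the definition (\ref{n1}) all symbols of $\mathcal{U}$ share the common length $l^*$, and the weights sum to one, $\sum_{x\in\mathcal{X}}w_\alpha(x)=1$. Because the $|\mathcal{U}|$ symbols of $\mathcal{U}$ all carry the length $l^*$, the Kraft inequality takes the form $|\mathcal{U}|D^{-l^*}+\sum_{x\in\mathcal{U}^c}D^{-l(x)}\leq 1$, so the genuine free variables are $l^*$ together with $\{l(x):x\in\mathcal{U}^c\}$.

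The objective is linear in these variables and the feasible region is convex (the map $l\mapsto D^{-l}$ is convex, hence $\{\mathbf{l}:\sum_x D^{-l(x)}\leq 1\}$ is convex), so the problem is a convex program and the Karush--Kuhn--Tucker conditions are necessary and sufficient for a global minimum. I would form the Lagrangian with a multiplier $\lambda$ on the Kraft constraint, observing first that the constraint is active at the optimum: every weight is strictly positive for $\alpha\in[0,1)$, so the objective is strictly increasing in each length, and a slack Kraft sum would allow shortening a codeword and strictly decreasing the pay-off. Differentiating with respect to $l(x)$ for $x\in\mathcal{U}^c$ gives $D^{-l(x)}=w_\alpha(x)/(\lambda\ln D)$, while differentiating with respect to $l^*$, whose coefficient is the aggregate weight $\sum_{x\in\mathcal{U}}w_\alpha(x)=\alpha+(1-\alpha)\sum_{x\in\mathcal{U}}p(x)$ from (\ref{n3}) and whose Kraft contribution carries the factor $|\mathcal{U}|$, gives $D^{-l^*}=\big(\alpha+(1-\alpha)\sum_{x\in\mathcal{U}}p(x)\big)/(|\mathcal{U}|\,\lambda\ln D)$.

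To pin down $\lambda$, I substitute these into the active Kraft equality. The $|\mathcal{U}|$ factors cancel in the $\mathcal{U}$-term and the total collapses to $\big(\sum_{x\in\mathcal{X}}w_\alpha(x)\big)/(\lambda\ln D)=1$; since the weights sum to one this forces $\lambda\ln D=1$. Back-substitution then yields $l^\dagger(x)=-\log w_\alpha(x)=-\log((1-\alpha)p(x))$ on $\mathcal{U}^c$ and $l^\dagger(x)=l^*=-\log\big((\alpha+(1-\alpha)\sum_{x\in\mathcal{U}}p(x))/|\mathcal{U}|\big)$ on $\mathcal{U}$. The elementary identity $\alpha+(1-\alpha)\sum_{x\in\mathcal{U}}p(x)=\alpha\sum_{x\in\mathcal{U}^c}p(x)+\sum_{x\in\mathcal{U}}p(x)$, obtained from $\sum_{x\in\mathcal{X}}p(x)=1$, rewrites this in the stated form (\ref{eq:solution1}) and simultaneously identifies $w_\alpha^\dagger(x)=(\alpha+(1-\alpha)\sum_{x\in\mathcal{U}}p(x))/|\mathcal{U}|$ for $x\in\mathcal{U}$. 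For the boundary case $\alpha=1$ the pay-off reduces to $\max_x l(x)$, whose minimizer subject to Kraft is the balanced assignment $\mathcal{U}=\mathcal{X}$, $l^\dagger(x)=\log|\mathcal{X}|$.

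As for the main obstacle: the calculus is routine, so the delicate points are the two justifications that make the stationary solution legitimate — that the Kraft constraint is active, and that assigning all of $\mathcal{U}$ a single shared length $l^*$ is not a loss of generality. The latter is really bookkeeping forced by the definition (\ref{n1}) of $\mathcal{U}$ rather than a genuine restriction, but it must be stated carefully, since it produces the $|\mathcal{U}|$ factor and hence the averaged weight $w_\alpha^\dagger(x)=\sum_{y\in\mathcal{U}}w_\alpha(y)/|\mathcal{U}|$ on $\mathcal{U}$. I would also flag that this lemma supplies only the \emph{shape} of the optimizer as a function of the still-unknown partition; checking that the resulting lengths are consistent with the defining property of $\mathcal{U}$ (that its members are genuinely the longest) is precisely the identification problem left to the merging analysis that follows.
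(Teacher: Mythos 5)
Your proposal is correct and follows essentially the same route as the paper's proof in Appendix~\ref{proof_lemma_prelimin}: fix the partition, form the Lagrangian on the Kraft constraint, derive the stationarity conditions on ${\cal U}^c$ and on the shared length $l^*$ (with the $|{\cal U}|$ factor), and recover $\lambda \ln D = 1$ from the active Kraft equality together with $\sum_x w_\alpha(x)=1$. The only cosmetic difference is that you justify activeness of the Kraft constraint by monotonicity of the objective in the free lengths, whereas the paper argues via complementary slackness that $\lambda^\dagger=0$ would force $\alpha=1$; both are valid, though note the paper itself cautions that positivity of the individual weights on ${\cal U}$ is only established after the lemma --- your argument really only needs positivity of the aggregate weight $\alpha+(1-\alpha)\sum_{x\in{\cal U}}p(x)$, which does hold.
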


\begin{proof}
See Appendix \ref{proof_lemma_prelimin}.
\end{proof}

\noi The point to be made regarding Lemma~\ref{lemma_prelimin} is twofold. Firstly, since for $\alpha \in [0,1)$ the pay-off ${\mathbb L}_\alpha({\bf l}, {\bf p})$ is continuous in ${\bf l}$ and the constraint set defined by Kraft inequality is closed and bounded (and hence compact), an optimal code length vector ${\bf l}^\dagger$ exists, and secondly the optimal code is given by (\ref{eq:solution1}). From the existence of the solution, it follows that for $\alpha \in [0,1)$,  $w_\alpha(x) > 0, \forall x \in {\cal U}$. This can also be deduced by noticing that the pay-off ${\mathbb L}_\alpha({\bf l}, {\bf p})$ is positive. As a result, all the weights $w_\alpha(x) > 0, \forall x \in {\cal U}$; otherwise, if there existed a negative weight $w_\alpha(x)$, one could have its corresponding codeword length to be large enough to make the pay-off ${\mathbb L}_\alpha({\bf l}, {\bf p})$ negative.

\noi From the characterization of optimal code length vector of Lemma~\ref{lemma_prelimin} and a well-known inequality, it follows that ${\mathbb L}_\alpha({\bf l}^\dagger, {\bf p})= -\sum_{x \in {\cal X}} w_\alpha(x) \log w_\alpha^\dagger(x)  \geq {\mathbb H}({\bf w}_\alpha)$ and equality holds if, and only if, $w_\alpha(x)=w_\alpha^\dagger(x), \forall x \in {\cal X}$. Therefore, for $\alpha \in [0,1)$ the weights satisfying (\ref{n3}) and corresponding to the optimal code length vector are uniquely represented via ${\bf w}_\alpha={\bf w}_\alpha^\dagger$. Moreover, by rounding off the optimal codeword lengths via $l^\ddagger(x) \tri \lceil -\log w_\alpha^\dagger(x)  \rceil$ Kraft inequality remains valid, while it is concluded that ${\mathbb H}({\bf w}_\alpha) \leq \sum_{x \in {\cal X}} l^\ddagger (x) w_\alpha(x) < {\mathbb H}({\bf w}_\alpha) +1$.

\noi The important observation concerning prefix code length vector ${\bf l}^\dagger \in {\mathbb R}_+^{|{\cal X}|}$ which minimizes the pay-off ${\mathbb L}_{\alpha}({\bf l}, {\bf p})=\sum_{x \in {\cal X}}w_\alpha(x) l(x) $ is that once the weight vector ${\bf w}_\alpha$ is identified for all $\alpha \in [0,1)$, then the optimal code is given by  ${ l}^\dagger(x)= -\log w_\alpha(x), ~\forall x \in {\cal X}$ and it is characterized for all $\alpha \in [0,1)$. The remaining part of this section is devoted to the problem of identifying the sets ${\cal U}$ and ${\cal U}^c$.


\noi The next lemma describes  monotonicity properties of the weight vector ${\bf w}_\alpha$ as a function of the probability vector ${\bf p}$, for all $\alpha \in [0,1)$.

\begin{lemma}
\label{lin}
Consider pay-off ${\mathbb L}_\alpha({\bf l}, {\bf p})$ and real-valued prefix codes. The following statements hold:
\begin{itemize}
 \item[1.] For $\{x,y\}  \subset  {\cal X}$, if $p(x)\leq p(y)$ then $w_\alpha(x) \leq w_\alpha(y)$, for all $\alpha \in [0,1)$. Equivalently,  $p(x_1) \geq p(x_2) \geq \ldots \geq p(x_{|{\cal X}|})>0$ implies $w_\alpha(x_1) \geq w_\alpha(x_2) \geq \ldots \geq w_\alpha(x_{|{\cal X}|})>0$, for all $\alpha \in [0,1)$.
 \item[2.] For $y \in {\cal U}^c$,  $w_\alpha(y)$ is a monotonically decreasing function of $\alpha \in [0,1)$.
 \item[3.]  For $x \in {\cal U}$, $w_\alpha(x)$ is a monotonically increasing  function of $\alpha \in [0,1)$.
 \end{itemize}
\end{lemma}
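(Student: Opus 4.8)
The plan is to reduce everything to the explicit description of the optimal weights in Lemma~\ref{lemma_prelimin}. Since $l^\dagger(x)=-\log w_\alpha^\dagger(x)$ with $\log=\log_D$, I would first record the identity $w_\alpha(x)=D^{-l^\dagger(x)}$, so that $w_\alpha$ is a strictly decreasing function of the optimal length. Consequently the first statement is equivalent to $p(x)\le p(y)\Rightarrow l^\dagger(x)\ge l^\dagger(y)$, the familiar requirement that less likely symbols receive no shorter codewords. I would prove this by an interchange argument on the optimal code: assuming $p(x)<p(y)$ but $l^\dagger(x)<l^\dagger(y)$, swap the two lengths. The swap leaves the multiset of lengths unchanged, hence preserves both $\|{\bf l}\|_\infty$ and the Kraft sum $\sum_{x}D^{-l(x)}$, so the new vector is admissible and only the average term of ${\mathbb L}_\alpha$ moves; its change is $(1-\alpha)\big(l^\dagger(y)-l^\dagger(x)\big)\big(p(x)-p(y)\big)<0$, contradicting optimality. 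Equalities $p(x)=p(y)$ are absorbed into the ordering convention built into ${\mathbb P}({\cal X})$. As a useful by-product this shows that ${\cal U}$, the set of maximal lengths, consists exactly of the least probable symbols.

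For statements~2 and~3 I would simply examine the two branches of Lemma~\ref{lemma_prelimin} as functions of $\alpha$, holding the partition $({\cal U},{\cal U}^c)$ fixed. For $y\in{\cal U}^c$, $w_\alpha(y)=(1-\alpha)p(y)$ has $\alpha$-derivative $-p(y)<0$ (strict, since every $p(y)>0$ on ${\cal X}$), giving strict decrease. For $x\in{\cal U}$, $w_\alpha(x)=\big(\alpha\sum_{z\in{\cal U}^c}p(z)+\sum_{z\in{\cal U}}p(z)\big)/|{\cal U}|$ is affine in $\alpha$ with slope $\big(\sum_{z\in{\cal U}^c}p(z)\big)/|{\cal U}|\ge 0$, so it is nondecreasing; it is strictly increasing exactly when ${\cal U}^c\neq\emptyset$, i.e.\ while the code still compresses, and reduces to the constant $1/|{\cal X}|$ in the fully balanced regime ${\cal U}={\cal X}$, consistent with the threshold $\alpha_{max}$ noted earlier.

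The hard part will be the parametric dependence of the partition on $\alpha$, since ${\cal U}$ and ${\cal U}^c$ themselves vary and the two branch formulas are therefore only valid piecewise. My plan is to first argue that membership is monotone, so that as $\alpha$ grows a symbol can only enter ${\cal U}$ and never leave and $\{\alpha:\,y\in{\cal U}^c\}$ is an interval of the form $[0,\alpha_y)$, and then to observe that $w_\alpha$ is continuous across each transition, a symbol $y$ joining ${\cal U}$ precisely when its decreasing value $(1-\alpha)p(y)$ meets the increasing common value $\big(\alpha\sum_{z\in{\cal U}^c}p(z)+\sum_{z\in{\cal U}}p(z)\big)/|{\cal U}|$. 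Piecewise monotonicity glued together by continuity then upgrades the local computations to the global monotonicity asserted. I expect the monotone-membership and continuity bookkeeping at these transition points, rather than any individual inequality, to be the only delicate step; the per-regime computations are immediate once Lemma~\ref{lemma_prelimin} is available.
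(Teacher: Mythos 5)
Your proposal is correct, and parts 2 and 3 follow essentially the paper's route: differentiate the two branch formulas from Lemma~\ref{lemma_prelimin} in $\alpha$ with the partition held fixed, obtaining slope $-p(y)<0$ on ${\cal U}^c$ and slope $\sum_{z\in{\cal U}^c}p(z)/|{\cal U}|$ on ${\cal U}$ (the paper writes this as $\bigl(1-\sum_{z\in{\cal U}}p(z)\bigr)/|{\cal U}|$, which is the same quantity). Where you genuinely diverge is part 1: the paper argues by cases on where $x$ and $y$ sit relative to the partition $({\cal U},{\cal U}^c)$, reading the inequality off the explicit weight formulas and, in the mixed case, invoking the sign of the derivatives plus continuity to show that once two weights meet they move together; you instead run a classical exchange argument on the optimal lengths (swapping $l^\dagger(x)$ and $l^\dagger(y)$ preserves $\|{\bf l}\|_\infty$ and the Kraft sum and strictly decreases the average term when the ordering is violated) and then translate back through $w_\alpha(x)=D^{-l^\dagger(x)}$. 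Your argument is more elementary and does not presuppose the closed-form weights, whereas the paper's case analysis is what sets up the merging picture used in Lemma~\ref{prop1}. Two caveats, neither fatal: (i) the exchange argument only yields a strict contradiction when $p(x)<p(y)$, so the equality case $p(x)=p(y)$ (where the lemma asserts $w_\alpha(x)=w_\alpha(y)$) still needs the explicit formulas or the uniqueness of ${\bf w}_\alpha^\dagger$ established after Lemma~\ref{lemma_prelimin}, not just "the ordering convention"; (ii) the gluing across values of $\alpha$ where the partition changes is only sketched in your last paragraph, but this matches the paper, which likewise defers the monotone-membership bookkeeping to Lemma~\ref{prop1} and its induction.
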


\begin{proof}
There exist three cases; more specifically,
\begin{enumerate}
\item $x,y \in \mathcal{U}^c$: then $ w_\alpha(x)=(1-\alpha)p(x) \leq (1-\alpha)p(y) = w_\alpha(y)$,~$\forall~\alpha \in [0,1)$;
\item $x,y \in \mathcal{U}$:  $w_\alpha(x)=w_\alpha(y) = w_\alpha^{*} \triangleq \min_{ x \in {\cal X}} w_\alpha(x)$;
\item $x \in \mathcal{U}$,  $y \in \mathcal{U}^c$ (or $x \in \mathcal{U}^c$,  $y \in \mathcal{U}$): consider the case $x \in \mathcal{U}$,  $y \in \mathcal{U}^c$. Since $x \in \mathcal{U}$ that means that $l(x)>l(y)$ and equivalently, $w_{\alpha}(y) > w_{\alpha}(x)$ Then, by taking derivatives we have
\begin{align}
\frac{\partial w_{\alpha}(y)}{\partial \alpha}&=-p(y)<0, \hst y \in {\cal U}^c,   \label{maxprobab} \\
\frac{\partial w_{\alpha}(x)}{\partial \alpha}&=\frac{\partial w_\alpha^{*}}{\partial \alpha}=\frac{1}{|\mathcal{U}|}\left(1-\sum_{ z \in \mathcal{U}}p(z)\right) >0, \hst x \in {\cal U}.  \label{minprobab}
\end{align}
According to \eqref{maxprobab} and \eqref{minprobab}, for $\alpha=0, w_\alpha(y)|_{\alpha =0}=p(y)\geq w_\alpha(x)|_{\alpha =0}=p(x)$, and as a function of $\alpha \in [0,1)$,   for $y \in \mathcal{U}^c$ the weight $w_{\alpha}(y)$ decreases, and for $x \in \mathcal{U}$ the weight $w_{\alpha}(x)$ increases. Hence, since $w_\alpha(\cdot)$ is a continuous function with respect to $\alpha$, at some $\alpha =\alpha^\prime$, $w_{\alpha^\prime}(x)=w_{\alpha^\prime}(y)=w^*_{\alpha^\prime}$.  Suppose that $w_\alpha(x)\neq w_\alpha(y)$, for some $\alpha=\alpha^\prime+d \alpha$, $d \alpha>0$.  Then, the largest weight will decrease and the smallest weight will increase as a function of $\alpha \in [0,1)$  according to \eqref{maxprobab} and \eqref{minprobab}, respectively. As a result, the two weight are moving together as a single weight.
\end{enumerate}
\end{proof}

\noi Before deriving the general coding algorithm, the following remark is introduced to illustrate  how the weights ${\bf w}_\alpha$ and the cardinality of the set ${\cal U}$ change as a function of  $\alpha \in [0,1)$.


\begin{remark}
\label{ex1}
Consider the special case when the probability vector ${\bf p}(x) \in {\mathbb P}({\cal X})$ consists of distinct probabilities, e.g.,   that $p(x_{|{\cal X}|})<p(x_{|{\cal X}|-1})$. The goal is to characterize the weights in a subset of $\alpha \in [0,1)$, such that  $w_\alpha(x_{|{\cal X}|})<w_\alpha(x_{|{\cal X}|-1})$ holds. Since ${\cal U}=\{x_{|{\cal X}|}\}$ and  $|{\cal U}|=1$, then
\begin{align*}
 {\mathbb L}_\alpha({\bf l}, {\bf p}) 
 &=\Big(\alpha+(1-\alpha) p(x_{|{\cal X}|}) \Big) l^*+ \sum_{x   \in    \mathcal{U}^c            }(1-\alpha) p(x) l(x) = \sum_{x \in {\cal X}} l(x) w_\alpha(x) ,
\end{align*}
\noi where  the weights are given by $w_{\alpha}(x) = (1-\alpha)p(x),~x \in \mathcal{U}^c$ and $w_{\alpha}(x_{|{\cal X}|}) = \alpha +(1-\alpha)p(x_{|{\cal X}|}) $ (by Lemma~\ref{lin}). For any $\alpha \in [0,1)$ such that the condition $w_\alpha(x_{|{\cal X}|}) < w_\alpha(x_{|{\cal X}|-1})$ holds, the optimal codeword lengths are given by $-\log w_\alpha(x), x \in {\cal X}$, and  this region of $\alpha \in [0,1)$ for which $|{\cal U}|=1$  is
\begin{align*}
\left\{\alpha \in [0,1):  \alpha +(1-\alpha)p(x_{|{\cal X}|})  < (1-\alpha)p(x_{|{\cal X}|-1}) \right\}.
\end{align*}
Equivalently,
\begin{align}\label{alpha1}
 \left\{ \alpha \in [0,1): \alpha < \frac{p(x_{|{\cal X}|-1})-p(x_{|{\cal X}|})}{1+p(x_{|{\cal X}|-1})-p(x_{|{\cal X}|})} \right\}.
\end{align}
Hence, under the condition $|{\cal U}|=1$ (i.e., $w_\alpha(x_{|{\cal X}|}) < w_\alpha(x_{|{\cal X}|-1})$), the optimal codeword lengths are given by $-\log w_\alpha(x), x \in {\cal X}$ for   $\alpha < \alpha_{1} \tri \frac{p(x_{|{\cal X}|-1})-p(x_{|{\cal X}|})}{1+p(x_{|{\cal X}|-1})-p(x_{|{\cal X}|})}$, while for     $\alpha \geq \alpha_{1}$ the form of the minimization problem changes, as more weights $w_{\alpha}(x)$ are such that $x \in \mathcal{U}$, and the cardinality of ${\cal U}$ is changed (that is, the partition of ${\cal X}$ into ${\cal U}$ and ${\cal U}^c$ is changed).  \\
Note that when $p(x_{|{\cal X}|}) = p(x_{|{\cal X}|-1})$, in view of the continuity of the weights ${\bf w}_\alpha$ as a function of $\alpha \in [0,1)$, the above optimal codeword lengths are only characterized for the singleton point $\alpha=\alpha_1={0}$, giving the classical codeword lengths. For $\alpha \in (\alpha_{1} ,1)$ the problem should be reformulated to characterize its solution over this region for which $|{\cal U}|\neq 1$. For example, if we consider the case for which $\alpha > \alpha_{1}$ and  $|{\cal U}|=2$ the problem can be written as
\begin{align*}
 {\mathbb L}_\alpha({\bf l}, {\bf p}) 
 &=\Big(\alpha+(1-\alpha) \big(p(x_{|{\cal X}|}+p(x_{|{\cal X}|-1})\big) \Big) l^*+ \sum_{x   \in    \mathcal{U}^c }(1-\alpha) p(x) l(x) = \sum_{x \in {\cal X}} l(x) w_\alpha(x) .
\end{align*}
 For any $\alpha \in [\alpha_{1},1)$ such that the condition $w_\alpha(x_{|{\cal X}|-1}) < w_\alpha(x_{|{\cal X}|-2})$ holds, the optimal codeword lengths are given by $-\log w_\alpha(x), x \in {\cal X}$ and this region is specified  by
\begin{align*}
\left\{\alpha \in [\alpha_{1},1): \frac{\alpha+(1-\alpha) \big(p(x_{|{\cal X}|}+p(x_{|{\cal X}|-1})\big)}{|{\cal U}|}  < (1-\alpha)p(x_{|{\cal X}|-2}) \right\}.
\end{align*}
Equivalently,
\begin{align}\label{alpha1}
 \left\{ \alpha \in [0,1): \alpha_{1} <\alpha < \frac{|{\cal U}| p(x_{|{\cal X}|-2})-(p(x_{|{\cal X}|})+p(x_{|{\cal X}|-1}))}{1+|{\cal U}| p(x_{|{\cal X}|-2})-(p(x_{|{\cal X}|})+p(x_{|{\cal X}|-1}))} \right\}.
\end{align}
\end{remark}

\ \

\noi Next, the merging rule which described how the weight vector ${\bf w}_\alpha$ changes as a function of  $\alpha \in [0,1)$ is identified, such that a solution to the coding problem is completely characterized  for arbitrary cardinality $|{\cal U}|$, and not necessarily distinct probabilities, for  any $\alpha \in [0,1)$. Clearly, there  is a  minimum $\alpha$ called $\alpha_{\max}$ such that  for any $\alpha \in [\alpha_{\max}, 1]$ there is no compression. This $\alpha_{\max}$ will be identified as well.

\ \

\noi Consider the complete characterization of the solution, as $\alpha$ ranges over $[0,1)$, for any initial probability vector ${\bf p}$ (not necessarily consisting of distinct entries).  Then, $|\mathcal{U}| \in \{1, 2, \ldots, |{\cal X}|-1\}$ while for $|{\cal U}|=|{\cal X}|$, $\alpha\in[\alpha_{\max},1]$, there is no compression since the weights are all equal. \\
 Define
\begin{align*}\label{ak}
\alpha_k & \tri \min\left\{\alpha \in [0,1): w_\alpha(x_{|{\cal X}|-(k-1)})= w_\alpha(x_{|{\cal X}|-k})\right\}, \hst    k \in \{1,\ldots, |{\cal X}|-1\},  \hst \alpha_0 \tri 0,  \\
\Delta \alpha_k &\tri \alpha_{k+1}-\alpha_k.
\end{align*}
By Lemma \ref{lin} the weights are ordered, hence  $\alpha_1$ is the smallest value of $\alpha \in [0,1)$ for which the smallest two weights are equal, $w_\alpha(x_{|{\cal X}|})= w_\alpha(x_{|{\cal X}|-1})$; $\alpha_2$ is the smallest value of $\alpha \in [0,1)$ for which the next smallest two weights are equal, $w_\alpha(x_{|{\cal X}|-1})= w_\alpha(x_{|{\cal X}|-2})$ and so forth, and $\alpha_{|{\cal X}|-1}$ is the smallest value of $\alpha \in [0,1)$ for which the two largest weights are equal, $w_\alpha(x_{2})= w_\alpha(x_{1})$. For a given value of $\alpha \in [0,1)$,  define the minimum over  $ x\in {\cal X}$ of the weights   by $w_{\alpha}^*  \tri \min_{ x \in {\cal X}} w_{\alpha}(x)$. \\
\noi Since for $k=0$, $w_{\alpha_0}(x)=w_{0}(x)=p(x), \forall x  \in {\cal X}$, is the set of initial symbol probabilities, let ${\cal U}_0$ denote the singleton set $\{x_{|{\cal X}|} \}$. Specifically,
\begin{align}
\mathcal{U}_0 \tri \left\{x\in \{x_{|{\cal X}|} \}:  p^{*}\tri \min_{ x\in {\cal X}} p(x)= p(x_{|{\cal X}|})  \right\} .
\end{align}
Similarly, ${\cal U}_1$ is defined as the set of symbols in $\{x_{|{\cal X}|-1}, x_{|{\cal X}|}\}$ whose weight evaluated at $\alpha_1$ is equal to the minimum weight  $w_{\alpha_1}^*$:
\bea
\mathcal{U}_1 \tri \Big\{ x \in \{ x_{|{\cal X}|-1}, x_{|{\cal X}|} \}:  w_{\alpha_1}(x)= w_{\alpha_1}^* \Big\}.
\eea
\noi In general, for  a given value of $\alpha_k, k \in \{1,\ldots, |{\cal X}|-1\}$, define
\begin{align}\label{Uk}
\mathcal{U}_k  \tri \Big\{ x \in \{ x_{|{\cal X}|-k},\ldots, x_{|{\cal X}|} \}:  w_{\alpha_k}(x)= w_{\alpha_k}^* \Big\}.
\end{align}

\begin{lemma}\label{prop1}
Consider  pay-off ${\mathbb L}_\alpha({\bf l}, {\bf p})$ and real-valued prefix codes. For $k \in \{0, 1, 2, \ldots, |{\cal X}|-1\}$ then
\begin{align}
w_\alpha(x_{ |{\cal X}|-k}) = w_{\alpha}(x_{|{\cal X}|})=w_{\alpha}^*, \hst \alpha \in [\alpha_k, \alpha_{k+1}) \subset [0,1).
\end{align}
Further, the cardinality of set $\mathcal{U}_k$ is $\left| \mathcal{U}_k \right|=k+1$, $k \in \{0, 1, 2, \ldots, |{\cal X}|-1\}$.
\end{lemma}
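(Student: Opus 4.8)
The plan is to establish both assertions together by induction on $k$, using three facts from Lemma~\ref{lin}. First, the weights stay ordered, $w_\alpha(x_1)\geq\cdots\geq w_\alpha(x_{|{\cal X}|})$ for every $\alpha\in[0,1)$, so the minimizing set ${\cal U}(\alpha)=\{x: w_\alpha(x)=w_\alpha^*\}$ is always a suffix $\{x_{|{\cal X}|-m+1},\ldots,x_{|{\cal X}|}\}$ of the ordered alphabet, hence determined entirely by its cardinality $m=|{\cal U}(\alpha)|$. Second, whenever ${\cal U}$ is held fixed, the common weight $w_\alpha^*=\big(\alpha\sum_{x\in{\cal U}^c}p(x)+\sum_{x\in{\cal U}}p(x)\big)/|{\cal U}|$ of the merged symbols is strictly increasing in $\alpha$ by \eqref{minprobab}, while each weight $w_\alpha(y)=(1-\alpha)p(y)$, $y\in{\cal U}^c$, is strictly decreasing by \eqref{maxprobab}. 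Third, once two weights coincide they remain equal for all larger $\alpha$ (last sentence of the proof of Lemma~\ref{lin}), so $m$ is nondecreasing in $\alpha$ and the only symbol that can join ${\cal U}$ is $x_{|{\cal X}|-m}$, the largest-index element still in ${\cal U}^c$.

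For the base case $k=0$ I would note that at $\alpha_0=0$ the weights equal the probabilities, so ${\cal U}_0=\{x_{|{\cal X}|}\}$ and $|{\cal U}_0|=1$; if $\alpha_1>0$ then on $[0,\alpha_1)$ the two smallest weights are strictly separated (they first coincide at $\alpha_1$), which with the ordering forces ${\cal U}(\alpha)=\{x_{|{\cal X}|}\}$ and makes $w_\alpha(x_{|{\cal X}|})=w_\alpha^*$ trivial. Assuming the statement for $k-1$, so that ${\cal U}(\alpha)=\{x_{|{\cal X}|-(k-1)},\ldots,x_{|{\cal X}|}\}$ with $|{\cal U}_{k-1}|=k$ on $[\alpha_{k-1},\alpha_k)$, continuity shows these $k$ weights still equal $w_{\alpha_k}^*$ at the endpoint $\alpha_k$, and the definition of $\alpha_k$ gives $w_{\alpha_k}(x_{|{\cal X}|-k})=w_{\alpha_k}(x_{|{\cal X}|-(k-1)})=w_{\alpha_k}^*$; by the ordering all of $x_{|{\cal X}|-k},\ldots,x_{|{\cal X}|}$ then attain the minimum, so ${\cal U}_k=\{x_{|{\cal X}|-k},\ldots,x_{|{\cal X}|}\}$ and $|{\cal U}_k|=k+1$. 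For $\alpha\in(\alpha_k,\alpha_{k+1})$ the third fact keeps these $k+1$ symbols merged at $w_\alpha^*$, while $w_\alpha(x_{|{\cal X}|-(k+1)})=(1-\alpha)p(x_{|{\cal X}|-(k+1)})$ remains strictly above $w_\alpha^*$ because $\alpha_{k+1}$ is by definition the \emph{smallest} $\alpha$ at which the two meet; hence ${\cal U}(\alpha)\equiv{\cal U}_k$ and $w_\alpha(x_{|{\cal X}|-k})=w_\alpha(x_{|{\cal X}|})=w_\alpha^*$ throughout $[\alpha_k,\alpha_{k+1})$, closing the induction.

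The step I expect to be the main obstacle is the bookkeeping of coincident thresholds produced by repeated probabilities. If $p(x_{|{\cal X}|-k})=p(x_{|{\cal X}|-(k+1)})$, these two symbols enter ${\cal U}$ at the same $\alpha$, forcing $\alpha_k=\alpha_{k+1}$ and collapsing $[\alpha_k,\alpha_{k+1})$ to the empty set; the weight-equality assertion is then vacuous and must be recognized as such rather than derived from the monotonicity argument, which tacitly assumes a nondegenerate interval. The cardinality statement nonetheless survives because ${\cal U}_k$ is defined over the \emph{restricted} candidate set $\{x_{|{\cal X}|-k},\ldots,x_{|{\cal X}|}\}$: although $x_{|{\cal X}|-(k+1)}$ also attains $w_{\alpha_k}^*$ in this degenerate case, it lies outside the candidate set and is not counted, so $|{\cal U}_k|=k+1$ still holds. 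I would also treat the terminal stage separately, where $k=|{\cal X}|-1$ yields $|{\cal U}|=|{\cal X}|$, all weights equal, and $\alpha_{|{\cal X}|-1}=\alpha_{\max}$ marks the onset of no compression.
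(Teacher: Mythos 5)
Your proposal is correct and follows essentially the same route as the paper's proof: induction on $k$, combined with the ordering and monotonicity facts of Lemma~\ref{lin} (the merged weight $w_\alpha^*$ increasing, the unmerged weights $(1-\alpha)p(x)$ decreasing) and the observation that once two weights meet they must move together thereafter. The only material difference is that you explicitly handle the degenerate case of repeated probabilities, where $\alpha_k=\alpha_{k+1}$ and the interval $[\alpha_k,\alpha_{k+1})$ collapses to the empty set, which the paper's argument leaves implicit; this is a refinement of the same argument rather than a different approach.
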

\begin{proof}
The validity of the statement  is shown by perfect induction.
\begin{align*}
 \mbox{Firstly, for} \hso \alpha=\alpha_{1}: \hst  w_\alpha(x_{|{\cal X}|})=w_\alpha(x_{|{\cal X}|-1}) \leq w_\alpha(x_{|{\cal X}|-2}) \leq \ldots \leq w_\alpha(x_{1}). \hst
\end{align*}
Suppose that, when $\alpha=\alpha_1+d\alpha \in [0,1)$, $d\alpha>0$, then $w_\alpha(x_{|{\cal X}|}) \neq w_\alpha(x_{|{\cal X}|-1}) $. Then,
\begin{align}
 {\mathbb L}_\alpha({\bf l}, {\bf p}) =\Big(\alpha+(1-\alpha)p(y) \Big) l^*+ \sum_{x \in \mathcal{U}_1^c }(1-\alpha) p(x) l(x), \nonumber
\end{align}
and the weights will be of the form $w_{\alpha}(x) = (1-\alpha)p(x)$ for $x \in \mathcal{U}_1^c $ and $w_{\alpha}(y) = \alpha +(1-\alpha)p(y)$ for $y\in {\cal U}_1 = \{x_{|{\cal X}|},x_{|{\cal X}|-1} \}$.
The rate of change of these weights with respect to $\alpha$ is
\begin{align}
\frac{\partial w_{\alpha}(x)}{\partial \alpha}&=-p(x)<0, ~x \in \mathcal{U}_1^c, \\
\frac{\partial  w_{\alpha}(y)}{\partial \alpha}&=1-p(y) >0, ~y \in \mathcal{U}_1 \label{minprob}.
\end{align}
Hence, the largest of the two would decrease, while the smallest would increase and therefore they meet again. This contradicts the  assumption that  $w_\alpha(x_{|{\cal X}|}) \neq w_\alpha(x_{|{\cal X}|-1}) $ for $\alpha>\alpha_1$, because otherwise one of the weights would be smaller and it should be increased with $\alpha$ as in \eqref{minprob}. Therefore, $w_\alpha(x_{|{\cal X}|}) = w_\alpha(x_{|{\cal X}|-1}), ~\forall \alpha \in [\alpha_1,1)$.\\
Secondly, for $\alpha>\alpha_k,, ~k \in \{2,\ldots, {|{\cal X}|}-1\}$,  suppose  the weights are
\begin{align*}
w_\alpha(x_{|{\cal X}|})=w_\alpha(x_{|{\cal X}|-1})= \ldots =w_\alpha(x_{|{\cal X}|-k})=w_{\alpha}^*.
\end{align*}
Then, the pay-off is written as
\begin{align*}
 {\mathbb L}_\alpha({\bf l}, {\bf p}) =\Big(\alpha+(1-\alpha)\sum_{x \in \mathcal{U}_k }p(x) \Big) l^*+ \sum_{x \in \mathcal{U}_k^c }(1-\alpha) p(x) l(x), \hst \alpha \in (\alpha_k, 1).
\end{align*}

\noi Hence,
\begin{align}
\frac{\partial w_{\alpha}(x)}{\partial \alpha}&=-p(x)<0, \hst x \in \mathcal{U}_k^c, \hso \alpha \in (\alpha_k, 1), \label{wi_prob}  \\
|\mathcal{U}_k |\frac{\partial w_{\alpha}^*}{\partial \alpha}&=1-\sum_{x \in \mathcal{U}_k }p(x)  >0,  \hst x \in \mathcal{U}_k, \hso \alpha \in (\alpha_k, 1).
\end{align}
Finally, in the case that $\alpha>\alpha_{k+1}, ~k \in \{2,\ldots, |{\cal X}|-2\}$, if any of the weights $w_\alpha( x),~x \in \mathcal{U}_k$,  changes differently than another, then, either at least one probability will become smaller than others and give a higher codeword length, or it will increase faster than the others and hence according to \eqref{wi_prob}, it will decrease to meet the other weights. 
Therefore, the change in this new set of probabilities should be the same, and the cardinality of $\mathcal{U}$ increases by one, that is, $|\mathcal{U}_{k+1}|=\left|k+2\right|, ~k\in \{ 2,\ldots |{\cal X}|-2 \}$.
\end{proof}

\ \

\noi Based on the results of Lemmas~\ref{lin} and~\ref{prop1}, the  next theorem describes how the weight vector ${\bf w}_\alpha$ changes as a function of $\alpha \in [0,1)$ so that the solution of the coding problem can be characterized.

\ \

\begin{theorem}\label{main_theorem}
Consider pay-off ${\mathbb L}_\alpha({\bf l}, {\bf p})$  and real-valued prefix codes. \\
For  $\alpha \in [\alpha_k,\alpha_{k+1})$, $k\in\{0,1,\ldots,|{\cal X}|-1\}$, the optimal weights
\begin{align*}
{\bf{w}^{\dagger}_{\alpha}} \tri \{{w}^{\dagger}_{\alpha}(x): x \in {\cal X}  \} \equiv \big({w}^{\dagger}_{\alpha}(x_1), {w}^{\dagger}_{\alpha}(x_2), \ldots, {w}^{\dagger}_{\alpha}(x_{|{\cal X}|})\big),
\end{align*}
 are given by
\begin{align}
{\small
w^{\dagger}_{\alpha}(x) =
\begin{cases}
(1-\alpha)p(x),~ x \in \mathcal{U}_{k}^{c}  \\ 
\displaystyle w_{\alpha_k}^* + (\alpha-\alpha_k) \frac{\sum_{x \in \mathcal{U}_{k}^{c}} p(x) }{|\mathcal{U}_{k}|}, ~x \in \mathcal{U}_{k},
\end{cases}
\label{weights_update}}
\end{align}
where $\mathcal{U}_k$ is given by \eqref{Uk} and
\begin{align}\label{akplus1}
\alpha_{k+1}=\alpha_k+(1-\alpha_k)\frac{(p(x_{|{\cal X}|-(k+1)})-p(x_{|{\cal X}|-k}) }{\frac{\sum_{x \in \mathcal{U}_{k}^{c}} p(x)}{|\mathcal{U}_{k}|}+p(x_{|{\cal X}|-(k+1)})} .
\end{align}
Moreover, the minimum $\alpha$, called $\alpha_{\max}$, such that for  $\alpha \in [\alpha_{\max},1] $  there is no compression, is given by
\begin{align}
\alpha_{\max}=1-\frac{1}{|\mathcal{X}|p(x_1)}.
\end{align}
\end{theorem}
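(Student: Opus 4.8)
The plan is to prove the three assertions in sequence, each drawing on the piecewise structure already secured by Lemmas~\ref{lin} and~\ref{prop1}.

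\textbf{Weight formula \eqref{weights_update}.} Fix $k$ and restrict attention to $\alpha\in[\alpha_k,\alpha_{k+1})$. On this interval Lemma~\ref{prop1} pins down the partition: $\mathcal{U}=\mathcal{U}_k$ with $|\mathcal{U}_k|=k+1$, and every weight in $\mathcal{U}_k$ equals the common minimum $w_\alpha^*$. The branch $x\in\mathcal{U}_k^c$ is then immediate from the defining relation \eqref{n3}. For $x\in\mathcal{U}_k$, because the partition is held constant across the whole interval, the derivative computed in \eqref{minprobab}, namely $\partial w_\alpha^*/\partial\alpha=\tfrac{1}{|\mathcal{U}_k|}\big(1-\sum_{z\in\mathcal{U}_k}p(z)\big)=\big(\sum_{x\in\mathcal{U}_k^c}p(x)\big)/|\mathcal{U}_k|$ (using $\sum_x p(x)=1$), is constant in $\alpha$. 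Hence $w_\alpha^*$ is affine on $[\alpha_k,\alpha_{k+1})$, and expanding it about the left endpoint $\alpha_k$ reproduces \eqref{weights_update} exactly.

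\textbf{Merge point \eqref{akplus1}.} By its definition $\alpha_{k+1}$ is the least $\alpha$ at which the next symbol $x_{|\mathcal{X}|-(k+1)}$ joins the merged set, i.e. the first $\alpha$ where its weight $(1-\alpha)p(x_{|\mathcal{X}|-(k+1)})$ meets $w_\alpha^*$. I would substitute the affine expression from \eqref{weights_update} into the equality $w_{\alpha_{k+1}}^*=(1-\alpha_{k+1})p(x_{|\mathcal{X}|-(k+1)})$ and solve the resulting linear equation for $\alpha_{k+1}$. The single boundary input is the left-endpoint value: since $x_{|\mathcal{X}|-k}$ enters $\mathcal{U}$ precisely at $\alpha_k$, continuity of the weights forces $w_{\alpha_k}^*=(1-\alpha_k)p(x_{|\mathcal{X}|-k})$. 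Inserting this and rearranging yields \eqref{akplus1}; the manipulation is elementary.

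\textbf{Threshold $\alpha_{\max}$.} At $\alpha_{\max}$ all weights coincide; since they sum to one, each equals $1/|\mathcal{X}|$, so the lengths are all $\log|\mathcal{X}|$ and no compression occurs. By part~1 of Lemma~\ref{lin} the most probable symbol $x_1$ carries the largest weight throughout and is therefore the last to merge, so $\alpha_{\max}$ is the value at which its weight $(1-\alpha)p(x_1)$ reaches the common level $1/|\mathcal{X}|$. Solving $(1-\alpha_{\max})p(x_1)=1/|\mathcal{X}|$ gives $\alpha_{\max}=1-\tfrac{1}{|\mathcal{X}|\,p(x_1)}$.

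The only delicate part is the bookkeeping at the merge instants: one must verify that $w_\alpha^*$ is genuinely affine on each $[\alpha_k,\alpha_{k+1})$ (which rests entirely on Lemma~\ref{prop1} keeping the partition frozen there) and read off the matching conditions $w_{\alpha_k}^*=(1-\alpha_k)p(x_{|\mathcal{X}|-k})$ that stitch consecutive intervals together. Once these are in hand, all three formulas follow from routine linear algebra.
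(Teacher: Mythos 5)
Your proposal is correct and follows essentially the same route as the paper's own proof: the affine growth of $w_\alpha^*$ via the derivative $\partial w_\alpha^*/\partial\alpha = \big(\sum_{x\in\mathcal{U}_k^c}p(x)\big)/|\mathcal{U}_k|$, the merge condition $(1-\alpha_{k+1})p(x_{|\mathcal{X}|-(k+1)})=w_{\alpha_{k+1}}^*$ for \eqref{akplus1}, and $(1-\alpha_{\max})p(x_1)=1/|\mathcal{X}|$ for $\alpha_{\max}$. Your explicit statement of the stitching condition $w_{\alpha_k}^*=(1-\alpha_k)p(x_{|\mathcal{X}|-k})$ is precisely the step the paper hides in ``after some manipulations,'' so the write-up is, if anything, slightly more complete.
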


{\begin{proof}
By Lemma \ref{prop1}, for $\alpha \in [\alpha_k, \alpha_{k+1})$, the lowest probabilities become equal and change together forming a total weight given by
\begin{align*}
\sum_{x \in \mathcal{U}_k }w_{\alpha}(x)&=|\mathcal{U}_{k}| w_{\alpha}^* = \alpha+(1-\alpha)p(x_{|{\cal X}|})+\ldots+(1-\alpha)p(x_{|{\cal X}|-k}).
\end{align*}
\noi Hence,
\begin{align}
|\mathcal{U}_{k}|\frac{ \partial  w_{\alpha}^*}{\partial \alpha}&=1-\sum_{x \in \mathcal{U}_k }p(x) , \\
\frac{ \partial w_\alpha^*}{\partial \alpha}&=\frac{1-\sum_{x \in \mathcal{U}_k }p(x) }{|\mathcal{U}_{k}|}={\frac{\sum_{x \in \mathcal{U}_{k}^{c}} p(x)}{|\mathcal{U}_{k}|}}.
\end{align}
By letting, $\delta_k(\alpha)\tri\alpha - \alpha_{k}$, then
\begin{align}\label{eq:sol1}
w_\alpha^* =w_{\alpha_k}^* +\delta_k(\alpha){\frac{\sum_{x \in \mathcal{U}_{k}^{c}} p(x)}{|\mathcal{U}_{k}|}}, \hst x \in {\mathcal{U}_{k}},
\end{align}
and $w_{\alpha}(x) =(1-\alpha)p(x), x \in {\mathcal{U}_{k}^{c}}$.
When $\delta_k(\alpha)|_{\alpha =\alpha_{k+1}}=\alpha_{k+1}-\alpha_k$,  then $w_{\alpha_{k+1}}(x_{|{\cal X}|-(k+1)})=w_{\alpha_{k+1}}^*$, and
\begin{align*}
&\left(1- \alpha_{k+1} \right) p(x_{|{\cal X}|-(k+1)}) = w_{\alpha_k}^* +\delta_k(\alpha_{k+1}){\frac{\sum_{x \in \mathcal{U}_{k}^{c}} p(x)}{|\mathcal{U}_{k}|}}. 
\end{align*}
After some manipulations,  $\alpha_{k+1}$ is given by
\begin{align}\label{alpha_k}
\alpha_{k+1}=\alpha_k+(1-\alpha_k)\frac{p(x_{|{\cal X}|-(k+1)})-p(x_{|{\cal X}|-k})}{\frac{\sum_{x \in \mathcal{U}_{k}^{c}} p(x)}{|\mathcal{U}_{k}|}+p(x_{|{\cal X}|-(k+1)})} .
\end{align}

\noi When there exist no compression all the weights are equal. Hence,
\begin{align}
w_{\alpha_{\max}}^*=\frac{\sum_{x\in \mathcal{X}}w_{\alpha_{\max}}(x)}{| \mathcal{X}|}=\frac{1}{| \mathcal{X}|}.
\end{align}
The minimum $\alpha$ beyond which there is no compression is the $\alpha$ at which all the weights become equal for the first time. This is the case when $(1-\alpha_{\max})p(x_1)=w_{\alpha_{\max}}^*$ or equivalently $
\alpha_{\max}=1-\frac{1}{|\mathcal{X}|p(x_1)}$.
\end{proof}

\ \

\noi Theorem~\ref{main_theorem} facilitates the computation of the optimal real-valued prefix codeword lengths vector ${\bf l}^\dagger$  minimizing pay-off ${\mathbb L}_{\alpha}({\bf l}, {\bf p})$ as a function of $\alpha \in [0,1)$ and the initial source probability vector ${\bf p}$, via re-normalization and merging. Specifically, the optimal weights are found recursively calculating $\alpha_k, k \in \{0,1,\ldots, |{\cal X}|-1\}$.  For any specific $\hat{\alpha} \in [0,1)$ an algorithm is given next, which describes  how to obtain the optimal real-valued prefix codeword lengths minimizing pay-off ${\mathbb L}_{\hat{\alpha}}({\bf l}, {\bf p})$.

\subsection{An Algorithm for Computing the Optimal Weights}\label{subsec:algorithm}
\label{fa}
\noi For any probability distribution  ${\bf p} \in {\mathbb P}({\cal X})$ and $\alpha \in [0,1)$ an algorithm is presented to compute the optimal weight vector ${\bf w}_\alpha$ of Theorem~\ref{main_theorem}. By Theorem~\ref{main_theorem} (see also Fig.~\ref{probs1} for a schematic representation of the weights for different values of $\alpha$), the weight vector ${\bf w}_\alpha$ changes piecewise linearly as a function of  $\alpha \in [0,1)$. The value of $\alpha_{\max}$ is also indicated.

\begin{figure}[H]
\centering
\includegraphics[width=\columnwidth]{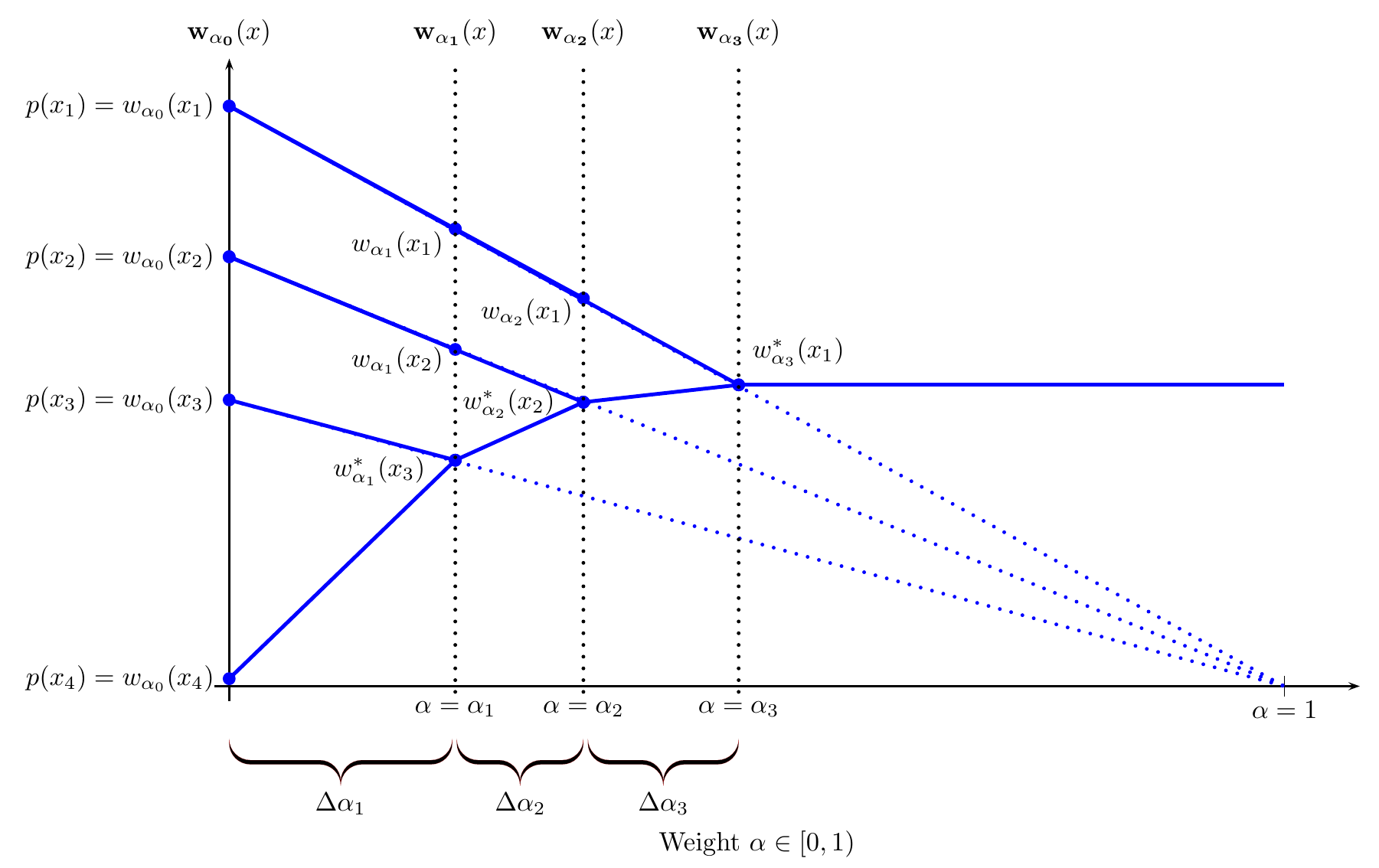}
\caption{A schematic representation of the weights for different values of $\alpha$.}\label{probs1}
\end{figure}

Given a  specific value of $\hat{\alpha} \in [0,1)$, in order to calculate the weights $w_{\hat{\alpha}}(x)$, it is sufficient  to determine  the values of $\alpha$ at the intersections by using \eqref{akplus1}, up to the value of $\alpha$ for which the intersection gives a value greater than $\hat{\alpha}$, or up to the last intersection (if all the intersections give a smaller value of $\alpha$) at $\alpha_{\max}$ beyond which there is no compression. For example, if $\alpha_1<\hat{\alpha}<\alpha_2$,  find all $\alpha$'s at the intersections up to and including $\alpha_2$ and subsequently, the  weights at $\hat{\alpha}$ can be found by using \eqref{weights_update}. Specifically,  check first if $\hat{\alpha}\geq \alpha_{\max}$. If yes, then the weights are equal to $1/|\mathcal{X}|$. If  $\hat{\alpha}< \alpha_{\max}$, then find $ \alpha_1,   \ldots, \alpha_m$, $m\in \mathbb{N}$, $m\geq 1$, until   ${\alpha_{m-1}}<\hat{\alpha}\leq {\alpha_{m}}$. As soon as the $\alpha$'s at the intersections are found, the  weights at $\hat{\alpha}$ can be found by using \eqref{weights_update}. The algorithm is easy to implement and extremely fast due to its low computational complexity. The worst case scenario appears when $\alpha_{|\mathcal{X}|-2}<\hat{\alpha}< \alpha_{\max}=\alpha_{|\mathcal{X}|-1}$, in which all $\alpha$'s at the intersections are required to be found. Note that, if $\alpha$ is closer to $\alpha_{\max}$, then it is easier  to find $\alpha_{\max}$ first and then to implement the algorithm backwards. In general, the worst case complexity of the algorithm is $\mathcal{O}(n)$. The complete algorithm is depicted under Algorithm \ref{charalambous_algorithm}.



\begin{algorithm}
\caption{\small Algorithm for Computing the Weight Vector ${\bf w}_\alpha$ for Problem~\ref{problem1}}
\label{charalambous_algorithm}
\begin{algorithmic}
\STATE $\,$\\
\STATE \textbf{initialize}
\STATE $\quad\, \mathbf{p}=\left(p(x_1), p(x_2), \ldots, p(x_{|{\cal X}|})\right)$, $\alpha = \hat{\alpha}$
\STATE $\quad\, k=0$,  $\alpha_0 = 0$,  $\alpha_{\max}=1-1/(|\mathcal{X}|p(x_1))$
\IF{$\hat{\alpha} \geq \alpha_{\max}$}
\RETURN $w^{\dagger}_{\hat{\alpha}}=1/|{\cal X}|$, $\forall x\in \mathcal{X}$
\ENDIF
\WHILE{$\displaystyle \alpha_{k}< \hat{\alpha}< \alpha_{\max}$}
\STATE {Calculate $\alpha_{k+1}$:}
\STATE {$\quad\, \displaystyle \alpha_{k+1}= \alpha_{k}+(1-\alpha_k)\frac{p(x_{|{\cal X}|-(k+1)})-p(x_{|{\cal X}|-k})}{\frac{\sum_{x \in \mathcal{U}_{k}^{c}} p(x)}{k+1}+p(x_{|{\cal X}|-(k+1)})}  $}
\STATE {$k \leftarrow k + 1$}
\ENDWHILE
\STATE {$k \leftarrow k - 1$}
\STATE {Calculate $\mathbf{w}^{\dagger}_{\hat{\alpha}}$:}
\FOR{$v = 1$ to $|{\cal X}|-(k+1)$}
\STATE $w^{\dagger}_{\hat{\alpha}}(x_{v})=(1-\hat{\alpha})p(x_{v})$
\STATE $v \leftarrow v + 1$
\ENDFOR
\STATE {Calculate $w^{*}_{\hat{\alpha}} $:}
\STATE $\quad\, \displaystyle w^{*}(\hat{\alpha}) =\left( 1- a_{k} \right)p(x_{|{\cal X}|-k})+ (\hat{\alpha}-\alpha_k) \frac{\sum_{x \in \mathcal{U}_{k}^{c}} p(x) }{k+1} $
\FOR{$v = |{\cal X}|-k$ to $|{\cal X}|$}
\STATE $\displaystyle w^{\dagger}(x_{v})=w^{*}_{\hat{\alpha}}$
\STATE $v \leftarrow v+ 1$
\ENDFOR
\RETURN $\mathbf{w}^{\dagger}_{\hat{\alpha}}$
\end{algorithmic}
\end{algorithm}

\section{Optimal Codeword Lengths}\label{sec:lengths}

\noi This section presents the complete characterization of the optimal real-valued codeword length vectors ${\bf l} \in {\cal L}\left({\mathbb R}_+^{|{\cal X}|}\right)$ of the pay-offs stated under Problem~\ref{problem1}. Further, a coding theorem is derived and relations to limited length coding and coding with general pay-off criteria are described. The related problems are stated under Problem~\ref{problem2}, Problem~\ref{problem3a}. Finally, the application of the new codes in the context of universal coding applications in which the source probability vector belongs to a specific class is discussed.  \\


\noi In view of Lemma~\ref{lemma_prelimin} (and the discussion following it) and Theorem~\ref{main_theorem} the main theorem which gives the optimal codeword length vector is presented.

 \begin{theorem}\label{mos}
Consider Problem~\ref{problem1} for any $\alpha \in [0,1)$. The optimal prefix  code ${\bf l}^\dagger \in {\mathbb {{R}}}_+^{|{\cal X}|}$ minimizing pay-off ${\mathbb L}_{\alpha}({\bf l}, {\bf p})$ is given by
\bea \label{eq:solutions}
l_{\alpha}^\dagger(x) = \left\{ \begin{array}{lll}
 -\log{\Big((1-\alpha)p(x)\Big)}=w_\alpha(x), &  x \in {\cal U}_k^c   \\
 -\log{\Big( \frac{\alpha+(1-\alpha)\sum_{x \in \mathcal{U}_k}p(x)}{  |\mathcal{U}_k|} \Big)}=w_\alpha(x), &  x \in {\cal U}_k . \end{array} \right.
\eea
Here  $\alpha \in [\alpha_k, \alpha_{k+1}) \subset [0,1) ,  k \in \{1, \ldots, |{\cal X}|-1\}$, and $\alpha_k, \alpha_{k+1}$ are found  from Theorem~\ref{main_theorem}.
 \end{theorem}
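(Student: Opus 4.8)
The plan is to obtain Theorem~\ref{mos} as a direct synthesis of Lemma~\ref{lemma_prelimin} and Theorem~\ref{main_theorem}, supplemented by one short algebraic simplification. Lemma~\ref{lemma_prelimin} already furnishes the optimal real-valued prefix lengths $l^\dagger(x)=-\log w_\alpha^\dagger(x)$ for \emph{every} admissible partition $(\mathcal{U},\mathcal{U}^c)$, together with existence of the minimizer (compactness of the Kraft set) and positivity of all weights for $\alpha\in[0,1)$. Thus the only quantity left undetermined in (\ref{eq:solution1}) is the partition itself, and the entire content of Theorem~\ref{mos} is to pin down $\mathcal{U}$ as a function of $\alpha$.

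First I would invoke Theorem~\ref{main_theorem}: for $\alpha\in[\alpha_k,\alpha_{k+1})$ the maximal-length set $\mathcal{U}$ coincides with the set $\mathcal{U}_k$ of (\ref{Uk}), and its cardinality is $|\mathcal{U}_k|=k+1$. This identification is exactly what Lemmas~\ref{lin} and~\ref{prop1} were built to deliver, through the monotonicity of the weights in $\alpha$ (decreasing on $\mathcal{U}^c$, increasing on $\mathcal{U}$) and the merging argument showing that once two weights coincide at some intersection they remain equal thereafter. Substituting $\mathcal{U}=\mathcal{U}_k$ into (\ref{eq:solution1}) gives the $x\in\mathcal{U}_k^c$ branch $l^\dagger(x)=-\log\big((1-\alpha)p(x)\big)$ verbatim.

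The single remaining step is to reconcile the $x\in\mathcal{U}_k$ branch of (\ref{eq:solution1}) with the claimed form in (\ref{eq:solutions}). Here I would use $\sum_{x\in\mathcal{U}_k^c}p(x)=1-\sum_{x\in\mathcal{U}_k}p(x)$ to rewrite the numerator of Lemma~\ref{lemma_prelimin},
\[
\alpha\sum_{x\in\mathcal{U}_k^c}p(x)+\sum_{x\in\mathcal{U}_k}p(x)=\alpha+(1-\alpha)\sum_{x\in\mathcal{U}_k}p(x),
\]
so that $w_\alpha^\dagger(x)=\frac{\alpha+(1-\alpha)\sum_{x\in\mathcal{U}_k}p(x)}{|\mathcal{U}_k|}$, which is precisely the second branch of (\ref{eq:solutions}). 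As an internal consistency check I would confirm that this closed form agrees with the piecewise-linear update (\ref{weights_update}) of Theorem~\ref{main_theorem}: evaluating the closed form at the left endpoint $\alpha_k$ reproduces $w_{\alpha_k}^*$, and adding the linear increment $(\alpha-\alpha_k)\frac{\sum_{x\in\mathcal{U}_k^c}p(x)}{|\mathcal{U}_k|}$ telescopes back to the same expression, since $\frac{\partial w_\alpha^*}{\partial\alpha}=\frac{\sum_{x\in\mathcal{U}_k^c}p(x)}{|\mathcal{U}_k|}$.

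There is no genuine obstacle here; the theorem is a corollary of results already in hand. If anything is delicate, it is purely bookkeeping: ensuring that the half-open intervals $[\alpha_k,\alpha_{k+1})$ tile $[0,\alpha_{\max})$ without gaps or overlaps, so that a unique $k$ (hence a unique $\mathcal{U}_k$ and a unique formula) is assigned to each $\alpha\in[0,1)$ below $\alpha_{\max}$, with the degenerate cases of repeated probabilities (where some $\Delta\alpha_k=0$) handled by the continuity argument already used in Remark~\ref{ex1} and Lemma~\ref{prop1}.
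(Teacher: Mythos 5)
Your proposal is correct and follows essentially the same route as the paper, whose own proof simply cites Lemma~\ref{lemma_prelimin} for the form of the lengths and Theorem~\ref{main_theorem} for the identification of $\mathcal{U}_k$ and the intervals $[\alpha_k,\alpha_{k+1})$. Your added algebraic step, rewriting $\alpha\sum_{x\in\mathcal{U}_k^c}p(x)+\sum_{x\in\mathcal{U}_k}p(x)$ as $\alpha+(1-\alpha)\sum_{x\in\mathcal{U}_k}p(x)$, is exactly the reconciliation the paper leaves implicit, and your consistency check against \eqref{weights_update} is a harmless bonus.
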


\begin{proof} (\ref{eq:solutions}) follows from Lemma~\ref{lemma_prelimin} while the specific $\alpha \in [\alpha_k, \alpha_{k+1})$ follow from Theorem~\ref{main_theorem}.
\end{proof}

\noi Note that for $\alpha =0$ Theorem~\ref{mos} corresponds to the Shannon solution $l^{sh}(x)=-\log p(x)$, while  for $\alpha \in [\alpha_{\max}, 1)$  the weight vector ${\bf w}_\alpha$ is identically distributed, and hence $l_\alpha^\dagger(x)= \frac{1}{|{\cal X}|}$. The behavior of $w_\alpha(x)$ and  $l_\alpha^\dagger(x)$ as a function of $\alpha \in [0,1)$ is described in the next Section via  illustrative examples.  Clearly, by rounding off the optimal codeword lengths via $l^\ddagger(x) \tri \lceil -\log \big(w_\alpha^\dagger(x) \big)  \rceil$ then ${\mathbb H}({\bf w}_\alpha) \leq \sum_{x \in {\cal X}} l^\ddagger (x) w_\alpha(x) < {\mathbb H}({\bf w}_\alpha) +1$.
Note that one may fix the minimum or maximum lengths in (\ref{eq:solutions}) and find the value of $\alpha \in [0,1)$ which gives these specific lengths. This observation will be discussed in detail in Section~\ref{limitedLength}.

\noi The following proposition shows that the optimal pay-off is non-decreasing and concave function of $\alpha$.

\begin{proposition}\label{prop_concavity}
The optimal pay-off ${\mathbb L}_{\alpha}({\bf l}^\dagger, {\bf p})$ is non-decreasing  concave function  of $\alpha \in [0,1)$.
\end{proposition}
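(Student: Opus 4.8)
The plan is to exploit the fact that the optimal pay-off is the lower envelope of a family of affine functions of $\alpha$. By Lemma~\ref{lemma_prelimin} the minimum is attained, so for each $\alpha \in [0,1)$ one may write
\begin{align*}
{\mathbb L}_{\alpha}({\bf l}^\dagger, {\bf p}) = \min_{{\bf l} \in {\cal L}\left({\mathbb R}_+^{|{\cal X}|}\right)} {\mathbb L}_{\alpha}({\bf l}, {\bf p}),
\end{align*}
where the admissible set ${\cal L}\left({\mathbb R}_+^{|{\cal X}|}\right)$ does not depend on $\alpha$. First I would isolate the dependence on $\alpha$ by rewriting the inner pay-off as
\begin{align*}
{\mathbb L}_{\alpha}({\bf l}, {\bf p}) = \sum_{x \in {\cal X}} l(x) p(x) + \alpha \Big( \| {\bf l} \|_\infty - \sum_{x \in {\cal X}} l(x) p(x) \Big),
\end{align*}
which, for each fixed ${\bf l}$, is an affine function of $\alpha$ with slope $\| {\bf l} \|_\infty - \sum_{x \in {\cal X}} l(x) p(x)$.

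For concavity, the key observation is that the optimal pay-off is the pointwise infimum over ${\bf l}$ of this family of affine (hence concave) functions, and the infimum of an arbitrary family of concave functions is concave. Concretely, for $\alpha_1, \alpha_2 \in [0,1)$ and $\lambda \in [0,1]$ each fixed affine member satisfies the defining concavity identity with equality; applying the infimum to both sides and using that the infimum of a sum dominates the sum of the infima yields
\begin{align*}
{\mathbb L}_{\lambda \alpha_1 + (1-\lambda)\alpha_2}({\bf l}^\dagger, {\bf p}) \geq \lambda \, {\mathbb L}_{\alpha_1}({\bf l}^\dagger, {\bf p}) + (1-\lambda)\, {\mathbb L}_{\alpha_2}({\bf l}^\dagger, {\bf p}),
\end{align*}
which is concavity. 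I would include the short verification of the ``infimum of concave is concave'' fact to keep the argument self-contained.

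For monotonicity, the slope satisfies $\| {\bf l} \|_\infty - \sum_{x \in {\cal X}} l(x) p(x) = \max_{x \in {\cal X}} l(x) - \sum_{x \in {\cal X}} l(x) p(x) \geq 0$ for every admissible ${\bf l}$, since the largest coordinate dominates any convex combination of the coordinates (the $p(x)$ being nonnegative and summing to one). Hence every member of the family is non-decreasing in $\alpha$, and the pointwise infimum of non-decreasing functions is again non-decreasing, giving the claim. The argument is essentially structural and the computations are routine; the only point requiring care, which I would state explicitly, is that the feasible set ${\cal L}\left({\mathbb R}_+^{|{\cal X}|}\right)$ is independent of $\alpha$, so that $\alpha$ enters only through the affine coefficient and not through the domain of minimization, and this is precisely what licenses the lower-envelope representation. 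An alternative route uses the identity ${\mathbb L}_{\alpha}({\bf l}^\dagger, {\bf p}) = {\mathbb H}({\bf w}_\alpha)$ together with the piecewise-linearity of ${\bf w}_\alpha$ from Theorem~\ref{main_theorem} and the concavity of entropy, but gluing the pieces across the breakpoints $\alpha_k$ requires checking that the slope does not increase, whereas the envelope argument bypasses this bookkeeping entirely.
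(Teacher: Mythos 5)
Your proof is correct, and it takes a genuinely different route from the paper's. The paper works from the explicit representation ${\mathbb L}_{\alpha}({\bf l}^\dagger, {\bf p}) = -\sum_{x} w_{\alpha}(x)\log w_{\alpha}(x)$ and differentiates twice in $\alpha$, using the closed-form expressions for $w_{\alpha}(x)$ on ${\cal U}$ and ${\cal U}^c$ to show the first derivative equals $\frac{1}{1-\alpha}\sum_x w_\alpha(x)\log\bigl(w_\alpha(x)/w^*_\alpha\bigr) \geq 0$ and the second derivative is $\leq 0$. Your lower-envelope argument --- the optimal value is the pointwise infimum, over the $\alpha$-independent Kraft-feasible set, of functions that are affine in $\alpha$ with nonnegative slope $\|{\bf l}\|_\infty - \sum_x l(x)p(x)$ --- is more elementary and structurally cleaner: it needs neither the explicit characterization of the optimal weights nor any differentiability, and in particular it sidesteps the gluing issue at the breakpoints $\alpha_k$ where ${\cal U}$ changes cardinality (the paper's calculus argument is implicitly carried out on each interval $[\alpha_k,\alpha_{k+1})$ and does not explicitly verify that the one-sided derivatives match up correctly across breakpoints, though the conclusion is right); you correctly flag this as the advantage of the envelope route. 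What the paper's computation buys in exchange is quantitative information: an explicit formula for $\partial {\mathbb L}/\partial\alpha$, and the observation that both derivatives vanish exactly when all weights are equal, which identifies the saturation at $\alpha_{\max}$. Both proofs are valid; yours is the more robust and general argument.
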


\begin{proof}
See Appendix \ref{proof_concavity}.
\end{proof}

\subsection{Coding Theorem}\label{subcoding}

This section proves a coding theorem by considering sources which generate symbols independently. Let ${\cal X}^n\tri \times_{i=1}^n {\cal X}$ denote the $n$th extension of the source which generates symbols in ${\cal X}^n$  independently according to ${\bf p}\in {\mathbb S}({\cal X})$ (e.g., the extension source is memoryless). A typical realization of the $n$th extension source $x^n \in {\cal X}^n$ is an n-tuple of the form $x^n=(x_{i_1}, x_{i_2}, \ldots, x_{i_n}), x_{i_j} \in {\cal X}
, 1 \leq j \leq n$. Since the symbols are independently generated then $p(x^n)=p(x_{i_1})p(x_{i_2})\ldots p(x_{i_n})$. Let $l(x^n)$ denote the length of some uniquely decodable code for a given realization $x^n \in {\cal X}^n$. Then, the maximum and average length pay-off for such $n-$tuple sequences $x^n$ is defined by
\begin{align*}
{\mathbb L}_\alpha^n({\bf l},{\bf p}) & \tri \alpha \max_{x^n \in {\cal X}^n}l(x^n) + (1-\alpha) \sum_{ x^n \in {\cal X}^n} l(x^n) p(x^n) \nonumber \\
& = \Big( \alpha + (1-\alpha) \sum_{x^n \in {\cal U}^n} p(x^n)\Big)l^* + \sum_{x^n \in {\cal U}^{n,c}} (1-\alpha) p(x^n)l(x^n) \nonumber \\
&=\sum_{x^n \in {\cal X}^n} w_\alpha(x^n) l(x^n), \hst \alpha \in [0,1),
\end{align*}
where $l^* \tri \max_{x^n \in {\cal X}^n}l(x^n), {\cal U}^n \tri \Big\{x^n \in {\cal X}^n: l(x^n)= l^*\Big\}, {\cal X}^n= {\cal U}^n \cup {\cal U}^{n,c}$, and $\sum_{x^n \in {\cal U}^n} w_\alpha(x^n)= \alpha + (1-\alpha) \sum_{x^n \in {\cal U}^n} p(x^n)$, $w_\alpha(x^n)= (1-\alpha)p(x^n), x^n \in {\cal U}^{n,c}$.
Let $l(x^n)$ be the integer length vector which satisfies
\bea
  -  \log w_\alpha(x^n)  \leq l(x^n) < -\log w_\alpha(x^n) +1 \label{it1}
  \eea
where
\bea
w_{\alpha}(x^n) = \left\{ \begin{array}{lll}
 (1-\alpha)p(x^n), &  x^n \in {\cal U}^{n,c}   \\
  \frac{\alpha+(1-\alpha)\sum_{x^n \in \mathcal{U}^n}p(x^n)}{  |\mathcal{U}^n|} , &  x^n \in {\cal U}^n . \end{array} \right.
\eea
Then the  maximum and average length pay-off per source symbol  $\frac{1}{n}{\mathbb L}_\alpha^n({\bf l},{\bf p})$ satisfies
\bea
\frac{1}{n} {\mathbb H}(w_\alpha(x^n)) \leq  \frac{1}{n}{\mathbb L}_\alpha^n({\bf l},{\bf p}) < \frac{1}{n} {\mathbb H}(w_\alpha(x^n))+\frac{1}{n} . \label{ct2}
\eea
Hence, by choosing $n$ sufficiently large then $\frac{1}{n}{\mathbb L}_\alpha^n({\bf l},{\bf p})$ can be made arbitrarily close to the lower bound $\frac{1}{n} {\mathbb H}(w_\alpha(x^n))$. Define the entropy rate of $w_\alpha(x^n)$ by
\bea
{\cal H}({\bf w}_\alpha) \tri \lim_{n \rar \infty} \frac{1}{n} {\mathbb H}(w_\alpha(x^n)) \label{ct3}
\eea
Then, the following coding theorem is obtained.

 \begin{theorem}
 \label{cthm1}
Consider a discrete source with alphabet ${\cal X}$ generating symbols independently according to ${\bf p} \in {\mathbb S}({\cal X})$. Then by encoding uniquely decodable sufficiently long sequences of $n$ source symbols it is possible to make the maximum and average length pay-off per source symbol  $\frac{1}{n}{\mathbb L}_\alpha^n({\bf l},{\bf p})$ arbitrarily  close the entropy rate ${\cal H}({\bf w}_\alpha)$. Moreover, it is not possible to find a uniquely decodable code whose maximum and average length pay-off per source symbol  $\frac{1}{n}{\mathbb L}_\alpha^n({\bf l},{\bf p})$ is less than the entropy rate ${\cal H}({\bf w}_\alpha)$.
 \end{theorem}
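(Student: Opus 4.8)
The plan is to prove the two halves of the statement separately: an \emph{achievability} part (the per-symbol pay-off can be forced arbitrarily close to ${\cal H}({\bf w}_\alpha)$) and a \emph{converse} part (no uniquely decodable code can fall below it). Nearly all of achievability has already been assembled in the sandwich bound (\ref{ct2}), so the first step is just to exploit it asymptotically. The code constructed in (\ref{it1}) assigns integer lengths obeying $l(x^n) \geq -\log w_\alpha(x^n)$, so $\sum_{x^n} D^{-l(x^n)} \leq \sum_{x^n} w_\alpha(x^n) = 1$; hence the Kraft inequality holds and a prefix (therefore uniquely decodable) code with these lengths exists. Feeding this code into the right-hand inequality of (\ref{ct2}) and letting $n \rar \infty$, the additive $\tfrac{1}{n}$ term vanishes while $\tfrac{1}{n}{\mathbb H}(w_\alpha(x^n)) \rar {\cal H}({\bf w}_\alpha)$ by the definition (\ref{ct3}); thus $\tfrac{1}{n}{\mathbb L}_\alpha^n({\bf l},{\bf p})$ can be driven below ${\cal H}({\bf w}_\alpha)+\epsilon$ for every $\epsilon>0$ and all sufficiently large $n$, which is the first assertion.

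For the converse I would take an \emph{arbitrary} uniquely decodable code for the $n$th extension, with length vector ${\bf l}$, and show its pay-off cannot drop below the entropy of the optimal weight vector. First invoke the McMillan inequality (the Kraft inequality for uniquely decodable codes), which places ${\bf l}$ inside the Kraft-feasible set $\sum_{x^n} D^{-l(x^n)} \leq 1$ over which the extension-source instance of Problem~\ref{problem1} is optimized. Next apply Lemma~\ref{lemma_prelimin} and Theorem~\ref{mos} to the source ${\cal X}^n$: the pay-off ${\mathbb L}_\alpha^n({\bf l},{\bf p})$, viewed as a function on the Kraft polytope, is minimized by the optimal real-valued lengths, and the minimum value equals ${\mathbb H}({\bf w}_\alpha(x^n))$ by the Gibbs/log-sum inequality argument recorded immediately after Lemma~\ref{lemma_prelimin}. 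Hence ${\mathbb L}_\alpha^n({\bf l},{\bf p}) \geq {\mathbb H}({\bf w}_\alpha(x^n))$ for every uniquely decodable ${\bf l}$; dividing by $n$ and passing to the limit via (\ref{ct3}) gives $\liminf_{n\rar\infty}\tfrac{1}{n}{\mathbb L}_\alpha^n({\bf l},{\bf p}) \geq {\cal H}({\bf w}_\alpha)$, exactly the claim that no uniquely decodable code beats the entropy rate.

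The step I expect to be the real obstacle is the one tying an arbitrary code's pay-off to the entropy of the \emph{optimal} weight vector. The difficulty is that the representation ${\mathbb L}_\alpha^n({\bf l},{\bf p}) = \sum_{x^n} w_\alpha(x^n) l(x^n)$ is code-dependent: the maximum-length set ${\cal U}^n$, and hence the induced weights, changes with the code, so one cannot naively apply a fixed-distribution Gibbs inequality against the optimal weights. The clean way around this is to never fix a distribution first: instead treat ${\mathbb L}_\alpha^n(\cdot,{\bf p})$ directly as the objective of the extension-source instance of Problem~\ref{problem1} and use that its global minimum over the Kraft polytope is attained at the optimal lengths with value ${\mathbb H}({\bf w}_\alpha(x^n))$. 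Every uniquely decodable code lies in that polytope by McMillan, so its pay-off automatically dominates this minimum. A secondary point to verify is that the limit in (\ref{ct3}) exists, so that ${\cal H}({\bf w}_\alpha)$ is well defined; here it is adopted as the definition, and the sandwich (\ref{ct2}) shows the per-symbol pay-off inherits the same limit.
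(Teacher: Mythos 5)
Your proposal is correct and follows essentially the same route as the paper: achievability is read off from the sandwich bound (\ref{ct2}) together with the definition (\ref{ct3}), and the converse rests on the fact, recorded in the discussion below Lemma~\ref{lemma_prelimin}, that the minimum of the pay-off over the Kraft-feasible set equals ${\mathbb H}({\bf w}_\alpha(x^n))$, combined with McMillan's inequality to place every uniquely decodable code in that set. Your explicit handling of the code-dependence of the induced weights in the converse is a worthwhile clarification of a step the paper leaves implicit, but it does not change the argument.
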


\begin{proof} The first part of the theorem follows by the above discussion. The second part of the theorem follows from the discussion below Lemma~\ref{lemma_prelimin}.
\end{proof}

%
%
%
%
\subsection{Limited-Length Shannon Coding}\label{limitedLength}

\noi Note that from the characterization of optimal codes for Problem~\ref{problem1},  one can also obtain as a special case  the characterization of  optimal codes minimizing the average codeword length subject to a hard constraint on the maximum codeword length, as defined below.

\begin{problem}\label{problem3a}
Given a known source probability vector ${\bf p} \in   {\mathbb P}({\cal X})$ and  a hard constraint $L_{\lim} \in [1,\infty)$, find a prefix code length vector ${\bf l}^* \in {\mathbb R}_+^{|{\cal X}|}$ which minimizes the Average Length Subject to Maximum Length Constraint pay-off ${\mathbb L}({\bf l}, {\bf p})$ defined by
\begin{subequations}\label{eq:problem3a}
\begin{align}
& \mathbb{ L}({\bf l}, {\bf p})  \tri \sum_{x\in {\cal X}} l(x) p(x),  \label{j3a}\\
& \text{subject to} \quad  \max_{x\in {\cal X}} l(x) \leq L_{\lim}. \label{j3b}
\end{align}
\end{subequations}
\end{problem}

\noi Limited length coding problems are of interest in various applications, such as distributed systems that are delay-sensitive and require short codewords or/and fast coders with short code table size.\\
\noi It is important to note that the solution of Problem~\ref{problem3a} does not in general give the solution of Problem~\ref{problem1}. For inter-valued prefix codes ${\bf l}^* \in {\mathbb Z}_+^{|{\cal X}|}$, the solution of Problem~\ref{problem3a} is addressed in \cite{2010:golinZhang} via a dynamic programming approach. This led to the so-called length-limited Huffman algorithm investigated extensively in the literature (for more details, see \cite{2010:golinZhang} and references therein).

\noi Here it is noticed that by introducing a real-valued Lagrange multiplier $\mu$ associated with the constraint on the maximum length the unconstrained pay-off is defined by
\begin{align} \label{Lag_1}
 {\mathbb L}({\bf l}, {\bf p}, \mu) & \tri \sum_{x \in {\cal X}} l(x) p(x) +  \mu(\max_{x\in {\cal X}} l(x)- L_{\lim}), \quad \mu>0 \nonumber \\
 & =\mu \max_{x\in {\cal X}} l(x) + \sum_{x \in {\cal X}} l(x) p(x) -\mu L_{\lim}.
\end{align}
Hence, the optimal code from Problem~\ref{problem3a} is obtained from the optimal code solution of  Problem 1, by substituting $\mu=\alpha/(1-\alpha)$, and then relating the value of the  Lagrange multiplier with a specific value of $\alpha$ for which the codeword lengths will be limited by $L_{\lim}$. The complete characterization of the optimal codes and the associated coding algorithm are given next.

\begin{theorem}\label{th:limited1}
Consider Problem~\ref{problem3a} for any $\alpha \in [0,1)$. The optimal prefix code ${\bf l}^\dagger \in {\mathbb R}_+^{|{\cal X}|}$ minimizing the pay-off ${\mathbb L}({\bf l}, {\bf p})$ is given by
\bes
l_{\alpha}^\dagger(x) = \left\{ \begin{array}{lll}
 -\log{\Big((1-\alpha)p(x)\Big)}, & x \in {\cal U}^{c}_k  \\
 -\log{\Big( \frac{\alpha+(1-\alpha)\sum_{x \in \mathcal{U}_k}p(x)}{  |\mathcal{U}_k|} \Big)}, & x \in {\cal U}_k \end{array} \right.
\ees
where
\bes
\alpha=
\begin{cases} 0 & \text{, if $-\log (p_{|\mathcal{X}|})\leq L_{\lim}$,}
\\
1-\frac{1-|\mathcal{U}_k|D^{-L_{\lim}}}{\sum_{x \in \mathcal{U}_{k}^{c}}p(x)} & \text{, if $-\log \left(\frac{\sum_{x\in \mathcal{X}}p(x)}{|\mathcal{X}|} \right) < L_{\lim} \leq -\log (p(x_{|\mathcal{X}|}))$, }
\\
\alpha_{\max} & \text{, if $ L_{\lim} = -\log \left(\frac{\sum_{x\in \mathcal{X}}p(x)}{|\mathcal{X}|} \right)$. }
\end{cases}
\ees
If $L_{\lim}<-\log\left(\frac{\sum_{x\in \mathcal{X}}p(x)}{|\mathcal{X}|} \right)$, there is no feasible solution to Problem \ref{problem3a}.
\end{theorem}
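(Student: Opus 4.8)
The plan is to exploit the Lagrangian relaxation already set up in \eqref{Lag_1} and reduce Problem~\ref{problem3a} to the solved Problem~\ref{problem1}. First I would record the algebraic identity that ties the two payoffs together: substituting $\mu = \alpha/(1-\alpha)$ into \eqref{Lag_1} gives
\begin{align*}
\mathbb{L}(\mathbf{l}, \mathbf{p}, \mu) = \frac{1}{1-\alpha}\,\mathbb{L}_\alpha(\mathbf{l}, \mathbf{p}) - \mu L_{\lim},
\end{align*}
so that for each fixed $\alpha \in [0,1)$ the minimizer over $\mathbf{l}$ of the unconstrained Lagrangian coincides with the minimizer of $\mathbb{L}_\alpha(\mathbf{l}, \mathbf{p})$, since the prefactor $1/(1-\alpha)$ is positive and the term $-\mu L_{\lim}$ is constant in $\mathbf{l}$. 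Consequently the optimal code length vector for the relaxed problem is exactly the one furnished by Theorem~\ref{mos}, which establishes the stated two-branch form of $l_\alpha^\dagger(x)$ over $\mathcal{U}_k^c$ and $\mathcal{U}_k$; it remains only to pin down the correct $\alpha$ (equivalently the multiplier $\mu$) as a function of $L_{\lim}$.

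Next I would translate the constraint \eqref{j3b} into a condition on $\alpha$. The maximum codeword length of the Problem~\ref{problem1} solution is attained on $\mathcal{U}_k$ and equals $-\log w_\alpha^*$ with $w_\alpha^* = \big(\alpha + (1-\alpha)\sum_{x \in \mathcal{U}_k} p(x)\big)/|\mathcal{U}_k|$; by Lemma~\ref{lin} this quantity is monotonically increasing in $\alpha$, so the maximum length is monotonically decreasing in $\alpha$. Since the Shannon code ($\alpha=0$) is the unconstrained minimizer of the average length $\mathbb{L}(\mathbf{l}, \mathbf{p})$, the optimal strategy is to use the smallest $\alpha$ for which the maximum length does not exceed $L_{\lim}$: if the Shannon code already complies, i.e. $-\log p(x_{|\mathcal{X}|}) \le L_{\lim}$, then $\alpha=0$ (Case~1); otherwise the constraint is active, and I would invoke convexity of Problem~\ref{problem3a} (convex objective over the Kraft-admissible convex set) together with the monotonicity above and the concavity in Proposition~\ref{prop_concavity} to argue that complementary slackness forces the maximum length to equal $L_{\lim}$ exactly at the optimum.

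To obtain the explicit multiplier I would then solve $-\log w_\alpha^* = L_{\lim}$, i.e. $w_\alpha^* = D^{-L_{\lim}}$. Writing $\sum_{x \in \mathcal{U}_k^c} p(x) = 1 - \sum_{x \in \mathcal{U}_k} p(x)$ and clearing denominators in
\begin{align*}
\frac{\alpha + (1-\alpha)\sum_{x \in \mathcal{U}_k} p(x)}{|\mathcal{U}_k|} = D^{-L_{\lim}}
\end{align*}
rearranges to $\alpha = 1 - \big(1 - |\mathcal{U}_k| D^{-L_{\lim}}\big)/\sum_{x \in \mathcal{U}_k^c} p(x)$, which is precisely the middle branch (Case~2); the index $k$ is the one for which this value lands in $[\alpha_k,\alpha_{k+1})$, recovered from Theorem~\ref{main_theorem}. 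For the extreme case I would note that $\sum_{x}p(x)/|\mathcal{X}| = 1/|\mathcal{X}|$, so $L_{\lim} = -\log(1/|\mathcal{X}|) = \log|\mathcal{X}|$ corresponds to the fully balanced tree $w_{\alpha_{\max}}^* = 1/|\mathcal{X}|$, giving $\alpha = \alpha_{\max}$ (Case~3). Finally, infeasibility for $L_{\lim} < \log|\mathcal{X}|$ follows directly from the Kraft inequality: if $l(x) \le L_{\lim}$ for every $x$ then $\sum_{x} D^{-l(x)} \ge |\mathcal{X}| D^{-L_{\lim}} > 1$, so no admissible length vector exists.

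The hard part will be the second step --- justifying rigorously that the active-constraint value of $\alpha$ produced by the Lagrangian is the true constrained optimum, rather than merely a stationary point. This requires combining strong duality for the convex relaxed problem with the joint determination of the pair $(k,\alpha)$: one must verify that the $\alpha$ solving $w_\alpha^* = D^{-L_{\lim}}$ is consistent with the partition $\mathcal{U}_k$ active on the interval $[\alpha_k, \alpha_{k+1})$ that contains it, using the piecewise-linear, monotone evolution of the weights from Theorem~\ref{main_theorem} and Lemma~\ref{lin}. The remaining boundary bookkeeping (agreement of Cases~1 and~2 at $L_{\lim} = -\log p(x_{|\mathcal{X}|})$, and of Cases~2 and~3 at $L_{\lim} = \log|\mathcal{X}|$) is routine verification once the monotone correspondence $\alpha \leftrightarrow L_{\lim}$ is in place.
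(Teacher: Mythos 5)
Your proposal is correct and follows essentially the same route as the paper: Lagrangian relaxation with $\mu=\alpha/(1-\alpha)$ reducing Problem~\ref{problem3a} to Problem~\ref{problem1}, then using the monotone decrease of the maximum length $-\log w_\alpha^*$ in $\alpha$ to split into the three cases and solving $w_\alpha^*=D^{-L_{\lim}}$ for the middle branch. Your version is in fact slightly more careful than the paper's in two places --- the explicit Kraft-inequality argument for infeasibility when $L_{\lim}<\log|{\cal X}|$, and the flagged need to verify consistency of the pair $(k,\alpha)$ with the interval $[\alpha_k,\alpha_{k+1})$ --- neither of which the paper spells out.
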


\begin{proof}
\noi Note that pay-off ${\mathbb L}({\bf l}, {\bf p})$ is a convex function and the constraint set is convex, hence  this is  a convex optimization problem. By introducing a real-valued Lagrange multiplier $\mu$ associated with the maximum length constraint, the augmented pay-off is equivalent to the pay-off \eqref{b30} in Problem \ref{problem1}, by setting $\mu \tri \alpha /(1-\alpha)$. The bound on the maximum length, $L_{\lim}$, determines the value of $\alpha$ for which the maximum codeword length is less than $L_{\lim}$. If $L_{\lim}$ is greater than or equal to the maximum codeword length for $\alpha=0$ (i.e., for $\alpha=0$, $\| l \|_{\infty}=-\log (p_{|\mathcal{X}|})$), then the maximum codeword length for all $\alpha \in [0,1)$ will be smaller than $L_{\lim}$. If $L_{\lim}$ is smaller than the length $-\log\left(\frac{\sum_{x\in \mathcal{X}}p(x)}{|\mathcal{X}|} \right)$, for which there is no compression, then the maximum length cannot be smaller, and therefore, for $\alpha> \alpha_{\max}$ there is no feasible solution to Problem \ref{problem3a}. If, however, $-\log \left(\frac{\sum_{x\in \mathcal{X}}p(x)}{|\mathcal{X}|} \right) \leq L_{\lim} \leq -\log (p(x_{|\mathcal{X}|}))$, then using the expression for the maximum length in Theorem \ref{mos} we have
\begin{align*}
L_{\lim}= -\log{\Big( \frac{\alpha+(1-\alpha)\sum_{x \in \mathcal{U}_k}p(x)}{  |\mathcal{U}_k|} \Big)}.
 \end{align*}
 As a result,  after manipulation $\alpha$  is given by
 \begin{align}
 \alpha=1-\frac{1-|\mathcal{U}_k|D^{-L_{\lim}}}{\sum_{x \in \mathcal{U}_{k}^{c}}p(x)}. \label{aLmax}
\end{align}
\noi From  \eqref{aLmax}, it is evident that the cardinality of ${\cal U}_k$,  $|\mathcal{U}_k|$ and the symbols $x\in\mathcal{U}_k$ should be known, in order to calculate ${\sum_{x \in \mathcal{U}_{k}^{c}}p(x)}$.
\end{proof}

\ \

\noi Next, a new algorithm (Algorithm \ref{charalambous_limitedLength}) is introduced to calculate the complete solution of Problem \ref{problem3a}, and hence the value of $\alpha$ and the weight vector ${\bf w}_\alpha$ such that the maximum codeword length is upper bounded by $L_{\lim}$.
\begin{algorithm}
\caption{\small Algorithm for Computing $\alpha$ and the Weight Vector ${\bf w}_\alpha$ for Problem~\ref{problem3a}}
\label{charalambous_limitedLength}
\begin{algorithmic}
\STATE $\,$\\
\STATE \textbf{initialize}
\STATE $\quad\, \mathbf{p}=\left(p(x_1), p(x_2), \ldots, p(x_{|{\cal X}|})\right)$, $k=0$, $\alpha_0 = 0$,
\STATE $\quad\, l_{\max}=-\log(\min_{x\in \mathcal{X}}p(x))$, $l_{\min}=-\log \left(\frac{\sum_{x\in \mathcal{X}}p(x)}{|\mathcal{X}|} \right)$
\IF{$L_{\lim}>l_{\max}$}
\STATE {$\alpha=0$}
\ELSIF{$L_{\lim}<l_{\min}$}
\STATE {No feasible $\alpha$ exist.}
\ELSE
%
\WHILE{$L_{\lim}<l_{\max} $}
\STATE {$ \displaystyle \alpha_{k+1}= \alpha_{k}+(1-\alpha_k)\frac{p(x_{|{\cal X}|-(k+1)})-p(x_{|{\cal X}|-k})}{\frac{\sum_{x \in \mathcal{U}_{k}^{c}} p(x)}{k+1}+p(x_{|{\cal X}|-(k+1)})}  $}
\STATE $\displaystyle w^{*}_{{\alpha_{k+1}}} =\left(1- a_{k+1} \right)p(x_{|{\cal X}|-(k+1)})$
\STATE $ \displaystyle l_{\max}= -\log(w^{*}_{{\alpha_{k+1}}} )$
\STATE {$k \leftarrow k + 1$}
\ENDWHILE
\STATE {$k \leftarrow k - 1$}
\STATE $\displaystyle \hat{\alpha}=1-\frac{1-|k+1|D^{-L_{\lim}}}{\sum_{x\in \mathcal{U}_k^c}p(x)}$
\STATE $\,$\\
\FOR{$v = 1$ to $|{\cal X}|-(k+1)$}
\STATE $w^{\dagger}_{\hat{\alpha}}(x_{v})=(1-\hat{\alpha})p(x_{v})$
\STATE $v \leftarrow v + 1$
\ENDFOR
\STATE $\displaystyle w^{*}_{\hat{\alpha}} =\left( 1- a_{k} \right)p(x_{|{\cal X}|-k})+ (\hat{\alpha}-\alpha_k) \frac{\sum_{x \in \mathcal{U}_{k}^{c}} p(x) }{k+1} $
\FOR{$v = |{\cal X}|-k$ to $|{\cal X}|$}
\STATE $\displaystyle w^{\dagger}(x_{v})=w^{*}_{\hat{\alpha}}$
\STATE $v \leftarrow v+ 1$
\ENDFOR

\ENDIF

\RETURN $\mathbf{w}^{\dagger}_{\hat{\alpha}}$
\end{algorithmic}
\end{algorithm}

Even though the two algorithms \ref{charalambous_algorithm} and \ref{charalambous_limitedLength} are similar, there exist some basic differences. Algorithm \ref{charalambous_algorithm} has a certain value of $\alpha$ for which it tries to identify the cardinality of $\mathcal{U}$ and hence, specify the weight vector $\mathbf{w}_\alpha$. On the other hand, algorithm \ref{charalambous_limitedLength} uses the maximum length to find if there exist a feasible $\alpha$ for which the limited-length constraint is fulfilled. Then, if feasibility is guaranteed, the cardinality is specified by comparing the optimum lengths at the merging points with the specified maximum length. Therefore, given the cardinality, the corresponding $\alpha$ is specified and finally, in the same way as in algorithm \ref{charalambous_algorithm}, the weight vector $\mathbf{w}_\alpha$ is specified.

%
%
%
%
\subsection{General Pay-Off and Limiting Problem}\label{limit}

\noi Problem~\ref{problem1} can be further  modified by noticing that  $\frac{1}{t}\log \sum_{ x \in {\cal X}} p(x) D^{t l(x)}$ is a nondecreasing function of $t \in [0,\infty)$, and $\lim_{t \rar \infty} \frac{1}{t}\log \sum_{ x \in {\cal X}} p(x) D^{t l(x)}=\max_{ x \in {\cal X}} l(x)$. Hence, by replacing  $\alpha \max_{ x \in {\cal X}} l(x)$ in ${\mathbb L}_{\alpha}({\bf l}, {\bf p})$,  by the function $\frac{\alpha}{t} \log \Big(\sum_{ x \in {\cal X}} p(x) D^{t \l(x)}\Big)$, the resulting pay-off takes into account moderate values below  $\max_{ x \in {\cal X}} l(x)$, obtaining a two-parameter pay-off. The pay-off resulting from this observation is defined next, while the solution is discussed.
\begin{problem}\label{problem2}
Given a known source probability vector ${\bf p} \in   {\mathbb P}({\cal X})$, weighting parameter $\alpha \in [0,1) $, and parameter $t \in (-\infty, \infty)$,   find a prefix code length vector ${\bf l}^* \in {\mathbb R}_+^{|{\cal X}|}$ which minimizes the two-parameter  Average of Linear and Exponential Functions of Length  pay-off ${\mathbb L}_{t,\alpha}({\bf l}, {\bf p})$   defined by
\begin{align}
{\mathbb L}_{t,\alpha}({\bf l}, {\bf p}) &  \tri \frac{\alpha}{t} \log \Big(\sum_{ x \in {\cal X}} p(x) D^{t \l(x)}\Big) + (1-\alpha) \sum_{ x \in {\cal X}} l(x) p(x), \label{b30aa}
\end{align}
for all $ \alpha \in [0,1)$ and $t \in (-\infty,\infty)$.

\end{problem}

\noi Although, the solution of Problem~\ref{problem2} will be investigated for $t \in [0, \infty)$, the problem is also well defined for $t \in (-\infty,0)$. The above pay-off is  a convex combination of the average of an  exponential function of the codeword length, and the average codeword length. However, moderate values of $t \in [0,\infty)$ are also of interest since the pay-off ${\mathbb L}_{t,\alpha}({\bf l}, {\bf p})$ can be interpreted as a trade-off between universal codes and average length codes.
Thus, for a fixed value of $\alpha \in [0,1)$, and since ${\mathbb L}_{t, \alpha}({\bf l}, {\bf p})$ is non-decreasing with respect to parameter  $t \in (0,\infty)$, then $t$  is another design parameter, which can be selected so that the average codeword length is below ${\mathbb L}_\alpha({\bf l}, {\bf p})$.  \\

\noi  The case $\alpha=1$ is investigated in \cite{1965:campbell_coding,1981:humblet_generalization,2006a:Baer,2008a:Baer, 2008b:Baer,1991:Merhav}, where relations to minimizing  buffer overflow probability are discussed. Further, it is not difficult to verify that ${\mathbb L}_{t,\alpha}({\bf l}, {\bf p})|_{\alpha=1}$ is also the dual problem of universal coding problems, formulated as a minimax, in which the maximization is over  a class of probability distributions which satisfy a relative entropy constraint with respect to a given fixed nominal probability distribution  \cite{2005:rezaei_bambos,2009:Gawrychowski_Gagie}. Hence, the pay-off ${\mathbb L}_{t,\alpha}|({\bf l}, {\bf p})_{\alpha=1}$ encompasses a trade-off between universal codes and buffer overflow probability  and average codeword length codes. \\
Similarly as in Problem~\ref{problem1}, a slight modification of the two-parameter pay-off to the  convex combination of the average of an exponential function of the pointwise redundancy and  the average pointwise redundancy, ${\mathbb L}_{t,\alpha}({\bf l}+\log {\bf p}, {\bf p})$, is of interest for integer-valued codes, since the real-valued codes minimizing this pay-off are $l^*(x)=-\log p(x), x \in {\cal X}$. To the best of our knowledge only the special cases of $\alpha=0, \alpha =1$ are investigated for pay-off ${\mathbb L}_{t,\alpha}({\bf l}+\log {\bf p}, {\bf p})$  (see \cite{2004:DrmotaSzpankowski,2006a:Baer,2008a:Baer, 2008b:Baer}).

 \begin{theorem}
 \label{mos1}
 Consider Problem~\ref{problem2} for any $\alpha \in [0,1)$, $t \in [0, \infty)$. The optimal prefix code ${\bf l}^\dagger \in {\mathbb R}_+^{|{\cal X}|}$ minimizing the pay-off ${\mathbb L}_{t,\alpha}({\bf l}, {\bf p})$ is given by
\bea
l_{t,\alpha}^\dagger(x) = -\log \Big( \alpha \nu_{t,\alpha}(x) + (1-\alpha) p(x)\Big), \hst \ x \in {\cal X}, \label{gsl}
\eea
where $\{\nu_{t,\alpha}(x): x \in {\cal X}\}$ is defined via the tilted probability distribution
\bea
\nu_{t,\alpha}(x) \tri \frac{D^{ t \: l_{t,\alpha}^\dagger(x)} p(x)}{ \sum_{ x \in {\cal X}} p(x) D^{ t \: l_{t,\alpha}^\dagger(x)} }, \hst   x \in {\cal X}.  \label{gs11}
\eea
 \end{theorem}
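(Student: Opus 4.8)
The plan is to treat Problem~\ref{problem2} as a convex program over the real-valued length vectors satisfying the Kraft inequality and to characterize its minimizer through Lagrangian stationarity. First I would verify convexity of the objective ${\mathbb L}_{t,\alpha}({\bf l}, {\bf p})$ in ${\bf l}$: the second term $(1-\alpha)\sum_{x} l(x) p(x)$ is linear, while the first term $\frac{\alpha}{t}\log\big(\sum_{x} p(x) D^{t l(x)}\big)$ is a nonnegative multiple (recall $\alpha, t \geq 0$) of a log-sum-exp expression, hence convex in ${\bf l}$. Together with the convexity of the Kraft constraint set $\{{\bf l}: \sum_x D^{-l(x)} \le 1\}$, this makes the problem convex, so that the stationarity conditions of the Lagrangian are both necessary and sufficient for global optimality, and (as in the discussion following Lemma~\ref{lemma_prelimin}) a minimizer exists by compactness for $\alpha \in [0,1)$.

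Next I would introduce a single multiplier $\lambda$ for the Kraft constraint and form $J({\bf l},\lambda) = {\mathbb L}_{t,\alpha}({\bf l}, {\bf p}) + \lambda\big(\sum_x D^{-l(x)} - 1\big)$, then differentiate with respect to a generic $l(y)$. The crucial computation is the derivative of the exponential term: using $\log \equiv \log_D$ and $\frac{\partial}{\partial l(y)} D^{t l(y)} = t\,(\ln D)\,D^{t l(y)}$, the factors of $t$ and $\ln D$ cancel and one obtains exactly $\alpha \nu_{t,\alpha}(y)$, with $\nu_{t,\alpha}$ the tilted distribution of (\ref{gs11}). The linear term contributes $(1-\alpha)p(y)$ and the constraint term contributes $-\lambda (\ln D) D^{-l(y)}$. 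Setting the sum to zero yields $D^{-l(y)} = \big(\alpha\nu_{t,\alpha}(y) + (1-\alpha)p(y)\big)/(\lambda \ln D)$.

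I would then fix the multiplier by summing this stationarity relation over $y \in {\cal X}$. Because both $\nu_{t,\alpha}$ and ${\bf p}$ are probability distributions, $\sum_y \big(\alpha\nu_{t,\alpha}(y) + (1-\alpha)p(y)\big) = \alpha + (1-\alpha) = 1$, so the Kraft constraint is tight precisely when $\lambda \ln D = 1$; this simultaneously confirms $\sum_y D^{-l^\dagger(y)} = 1$, i.e.\ that the constraint is active as expected. Substituting back gives $D^{-l_{t,\alpha}^\dagger(y)} = \alpha\nu_{t,\alpha}(y) + (1-\alpha)p(y)$, and taking $-\log$ delivers (\ref{gsl}). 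The non-negativity constraints $l(y)\ge 0$ are inactive since each weight $\alpha\nu_{t,\alpha}(y)+(1-\alpha)p(y)\le 1$, so the bare Lagrangian suffices.

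The main obstacle is that (\ref{gsl}) is only an implicit characterization: the tilted weights $\nu_{t,\alpha}(y)$ in (\ref{gs11}) themselves depend on the optimal lengths $l_{t,\alpha}^\dagger$. Accordingly I would present the result as a self-consistency (fixed-point) condition rather than a closed form, arguing that a solution exists and is the unique global minimizer because the underlying program is convex with a compact feasible set for $\alpha\in[0,1)$. As a sanity check I would note the degenerate limits: at $t=0$ one has $\nu_{0,\alpha}(y)=p(y)$, so (\ref{gsl}) collapses to the Shannon code $l(y)=-\log p(y)$, consistent with $\lim_{t\to 0}\frac{1}{t}\log\sum_x p(x)D^{t l(x)} = \sum_x p(x)l(x)$; and at $\alpha=0$ it again returns $-\log p(y)$, matching Problem~\ref{problem1} at $\alpha=0$.
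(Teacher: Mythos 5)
Your proposal is correct and follows essentially the same route as the paper: a Lagrangian/KKT argument for the Kraft-constrained convex program, where differentiating the log-sum-exp term produces the tilted distribution $\nu_{t,\alpha}$, and summing the stationarity relation over ${\cal X}$ (using that both $\nu_{t,\alpha}$ and ${\bf p}$ are probability vectors) pins down $\lambda^\dagger \ln D = 1$ and yields $D^{-l^\dagger(x)} = \alpha\nu_{t,\alpha}(x)+(1-\alpha)p(x)$. The only point the paper makes explicit that you gloss over is ruling out $\lambda^\dagger=0$ (summing the stationarity condition with $\lambda^\dagger=0$ gives $1=0$), which is what licenses tightness of the Kraft inequality before you solve for the multiplier.
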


\begin{proof} See Appendix \ref{proof_mos1}.
\end{proof}

\noi Note that the solution stated under Theorem~\ref{mos1} corresponds, for $\alpha=0$ to the Shannon code, which minimizes the average codeword length pay-off, while for $\alpha=1$ (after manipulations) it is given by
\begin{align}
 l_{t,\alpha=1}^{\dagger}(x) = - \frac{1}{1+t} \log p(x)  + \log \Big( \sum_{x \in {\cal X}} p(x)^\frac{1}{1+t}   \Big), \hso x \in {\cal X}. \label{gs1}
\end{align}
Thus, (\ref{gs1}) is precisely the solution of a variant of the Shannon code,  minimizing the average of an exponential function of the codeword length pay-off \cite{1981:humblet_generalization,2008b:Baer}. It can be shown that
\bea
{\mathbb H}_{\frac{1}{1+t}}({\bf p}) \leq \sum_{x \in {\cal X}} p(x) l_{t,\alpha=1}^{\dagger}(x) < {\mathbb H}_{\frac{1}{1+t}}({\bf p}) +1 \label{b5}
 \eea
where $H_{a}({\bf p})$ is the  R\'enyi entropy given by
 \bea
    {\mathbb H}_{a}({\bf p}) \tri \frac{1}{1-a} \log \Big( \sum_{x \in {\cal X}} p(x)^a \Big), \hst a \tri \frac{1}{1+t}, \hso t \neq 0.  \label{b3}
 \eea
However, for any $\alpha \in (0,1)$ the following  system of equations should be solved.
\bea
D^{-l^\dagger(x)}=\alpha \frac{D^{ t \: l^\dagger(x)} p(x)}{ \sum_{ x \in {\cal X}} p(x) D^{ t \: l^\dagger(x)} } + (1-\alpha )p(x), \hst \forall  x \in {\cal X}. \label{gs11}
\eea

Although, the solution of Problem~\ref{problem1} is different from the solution of Problem~\ref{problem2}, in the limit, as $t \rar \infty$, the solutions should coincide,  provided the merging rule on how the solution changes with $\alpha \in [0,1)$ is employed. To this end, consider the following identities.
\bea
 \lim_{t \rar \infty} \frac{1}{t}\log \Big(\sum_{ x \in {\cal X}} p(x) D^{t l(x)}\Big)=\max_{ x \in {\cal X}} l(x), \hst  {\mathbb L}_{\infty,\alpha}({\bf l}, {\bf p}) \tri  \lim_{t \rar \infty} {\mathbb L}_{t,\alpha}({\bf l}, {\bf p})= {\mathbb L}_{\alpha}({\bf l}, {\bf p}). \label{l}
\eea
 Since the pay-off ${\mathbb L}_{t,\alpha}({\bf l}, {\bf p})$ is  in the limit, as $t \rar \infty$, equivalent to $\lim_{t \rar \infty} {\mathbb L}_{t,\alpha}({\bf l}, {\bf p})  ={\mathbb L}_{\alpha}({\bf l}, {\bf p})$, $\forall \alpha \in [0,1)$, then the codeword length vector minimizing ${\mathbb L}_{t,\alpha}({\bf l}, {\bf p})$ is expected to converge in the limit as $t \rar \infty$, to that which minimizes ${\mathbb L}_{\alpha}({\bf l}, {\bf p})$. To verify this claim consider the behavior of the optimal two parameter pay-off ${\mathbb L}_{\alpha}({\bf l}_{t,\alpha}^\dagger, {\bf p})$, for a fixed $\alpha \in [0,1)$ as $t$ increases, given in Theorem~\ref{mos1},  which is equivalent to
\bea
D^{-l_{t,\alpha}^\dagger(x)}=\alpha \frac{D^{ t \: l_{t, \alpha}^\dagger(x)} p(x)}{ \sum_{ x \in {\cal X}} p(x) D^{ t \: l_{t,\alpha}^\dagger(x)} } + (1-\alpha )p(x), \hst \forall  x \in {\cal X}. \label{gs11a}
\eea
Write
\begin{align*}
\sum_{ x \in {\cal X}}p(x) D^{t \: l_{t,\alpha}^\dagger(x)}= \sum_{ x \in {\cal U}^{c}} p(x) D^{t \: l_{t,\alpha}^\dagger(x)} + \sum_{ x \in {\cal U}} p(x) D^{t \: l_{t,\alpha}^\dagger(x)}.
\end{align*}
Utilizing the validity of the  limits under (\ref{l}), in the limit as, $t \rar \infty$, (\ref{gs11a}) becomes
\begin{align}
D^{-l_\alpha^\dagger(x)} & = (1-\alpha )p(x), \hst   x \in {\cal U}_k^c \label{gss1} \\
D^{-l_\alpha^\dagger(x)} &=\alpha \frac{ p(x)}{ \sum_{ x \in {\cal U}_k} p(x) } + (1-\alpha )p(x), \hst   x \in {\cal U}_k .\label{gss2}
\end{align}
Since $p(x)=p(y), \forall x, y \in {\cal U}_k$, then (\ref{gss1}) and (\ref{gss2}) are the same as  \eqref{weights_update}.
These calculations verify that $\lim_{t \rar \infty} {\mathbb L}_{t,\alpha}({\bf l}, {\bf p})={\mathbb L}_{\alpha}({\bf l}, {\bf p}), ~\forall {\bf l},  \mbox{and at} \hso {\bf l} ={\bf l}^\dagger$.
The point to be made here is that the solution of Problem~\ref{problem1} can be deduced from the solution of Problem~\ref{problem2}, in the limit as $t \rar \infty$, provided the merging rule on how the solution changes with $\alpha \in [0,1)$ is employed.

%
%
%
%

\subsection{Generalizations: Connections to Universal Coding}
\label{sec:universal}
\noi Although, the current paper does not investigate universal coding problems, this exposition is  included for the purpose of demonstrating that the optimal codes characterized  under Problem~\ref{problem1}, can be used to address problems of universal coding, having pay-off ${\mathbb L}_{t,\alpha}({\bf l}, {\bf p})$ or ${\mathbb L}_{\alpha}({\bf l}+\log {\bf p}, {\bf p})$, and  probability vector ${\bf p}$  belonging to a class of source probability vectors.  \\
Recall that universal coding and universal modeling \cite{1996:Rissanen}, and the so-called Minimum Description Length (MDL) principle  and Stochastic Complexity \cite{1998:barron} are often examined when the source probability distribution ${\bf p}$  is unknown,  modeled via a parameterized class ${\bf p}_{\theta} \tri \Big\{p_\theta(x): x \in {\cal X}, \theta \in {\Theta} \subset \Re^d\Big\}$ ($\theta$ is a  parameter vector), or a non-parameterized class ${\mathbb  S}({\cal X}) \subset {\mathbb P}({\cal X})$. Universal coding initiated in \cite{1973:davisson,1980:davisson_Leon-Garcia}, and further investigated in \cite{1981:KrichevskyTrofimov,1987:Shtarkov} aims at constructing a code for sequences of symbols generated by unknown sources, ${\bf p}_{\theta}$ or   ${\mathbb  S}({\cal X}) $,   such that as the length of the sequence increases, the average code length converges to the entropy of the true source that generated the sequence.\\
\noi When the source probability vector is not a singleton set, but a family or a class of probability vectors, then Problem~\ref{problem1} can be re-formulated to account for this generality as follows.

\begin{problem}\label{problem4}
Given a family of  source probability vectors ${\bf p} \in   {\mathbb S}({\cal X}) \subset {\mathbb P}({\cal X})$ and weighting parameter $\alpha \in [0,1) $, define the one parameter pay-offs as follows. \\
{\bf A.} Worst Case Maximum and Average Length.
\bea
{\mathbb L}_{\alpha}^+({\bf l}, {\bf p}) \tri  \max_{  {\bf p} \in   {\mathbb S}({\cal X})}  \Big\{ \alpha \max_{ x\in {\cal X}} l(x) + (1-\alpha) \sum_{ x \in {\cal X}} l(x) p(x)\Big\}. \label{b300}
\eea
{\bf B.} Worst Case Maximum and Average Redundancy.
 \begin{align}
  {\mathbb L}_{\alpha}^+({\bf l}+\log {\bf p}, {\bf p}) & \tri    \max_{  {\bf p} \in   {\mathbb S}({\cal X})}  \Big\{
 \alpha \max_{ x \in {\cal X}} \Big( l(x) + \log p(x)\Big)  + (1-\alpha) \Big( \sum_{ x \in {\cal X}} l(x) p(x) - {\mathbb H}({\bf p})\Big) \Big\}. \label{lm300}
 \end{align}
 The objectives are the following.
 \bi
\item  Find a prefix code length vector ${\bf l}^* \in {\mathbb R}_+^{|{\cal X}|}$ which minimizes the pay-off ${\mathbb L}_{\alpha}^+({\bf l}, {\bf p})$,

 \item  Find a prefix code length vector ${\bf l}^* \in {\mathbb R}_+^{|{\cal X}|}$ which minimizes the pay-off ${\mathbb L} {\mathbb R}_{\alpha}^+({\bf l}+\log {\bf p}, {\bf p})$,

\ei
for all $ \alpha \in [0,1)$.
\end{problem}

\noi The universal coding problems defined above are based on minimax techniques, the minimization being over the codeword lengths satisfying Kraft inequality, the maximization being over the class of probability vectors ${\mathbb S}({\cal X})$. Next it will be shown how the  complete characterization of the optimal codes for Problem~\ref{problem1} can be used to  obtain a complete characterization for the above minimax problem, by using  von Neumann's minimax (or minisup) theorem apply. Consider the case when ${\mathbb S}({\cal X})$ is compact (closed and bounded since it is a subset of a finite dimensional space) and convex.  Then, since the set defining the  Kraft inequality in compact and convex, the pay-off  $\alpha \max_{ x\in {\cal X}} l(x) + (1-\alpha) \sum_{ x \in {\cal X}} l(x) p(x)$ is convex and continuous in ${\bf l} \in {\mathbb R}_+^{|{\cal X}|}$ for a  fixed ${\bf p} \in   {\mathbb S}({\cal X})$, and convex and continuous in ${\bf p} \in   {\mathbb S}({\cal X})$ for a fixed ${\bf l} \in {\mathbb R}_+^{|{\cal X}|}$. By  von Neumann's minimax theorem, the minimum over ${\bf l}^* \in {\mathbb R}_+^{|{\cal X}|}$  is interchanged with the maximum over ${\bf p} \in   {\mathbb S}({\cal X})$. Therefore, the solution of Problem~\ref{problem4} is characterized by maximizing over  ${\bf p} \in   {\mathbb S}({\cal X})$, the solution of Problem~\ref{problem1}. On the other hand, if the compactness of the set ${\mathbb S}({\cal X})$ is removed, then the maximization is replaced by supremum and  von Neumann's minsup theorem applies, hence one can interchange the minimum with the supremum utilizing again the solution of   Problem~\ref{problem1}. Hence, the solution to the coding Problem~\ref{problem4} is within our reach and it is based on the solution to Problem~\ref{problem1}.  \\
One may also investigate to what extend von Neumann's minimax theorem holds for the redundancy pay-off (\ref{lm300}); for $\alpha=1$,  ${\mathbb L}_{\alpha}^+({\bf l}+\log {\bf p}, {\bf p})|_{\alpha=1}$,  is  investigated in  \cite{2004:DrmotaSzpankowski,2003:JacquetSzpankowski}.


%
%
%
%
\section{Illustrative Examples}\label{sec:examples}
This section presents two illustrative examples of the optimal codes derived in this paper, with emphasis on the merging rule which partitions the source alphabet ${\cal X}$ into ${\cal U}$ and ${\cal U}^c$ as a function of $\alpha \in [0,1)$.

\subsection{Optimal weights for all $\alpha \in [0,1)$}

\noi Consider binary codewords and a source with $|{\cal X}|=4$ and probability distribution
\begin{align*}
\displaystyle \mathbf{p}=\left(\begin{array}{cccc}
   \frac{8}{15} &  \frac{4}{15} &  \frac{2}{15} &  \frac{1}{15}
\end{array}\right).
\end{align*}
Using Algorithm \ref{charalambous_algorithm} one can find the optimal weight vector $\mathbf{w}_\alpha^{\dagger}$ for different values of $\alpha \in [0,1)$ for which pay-off (\ref{b30}) of Problem~\ref{problem1} is minimized. Computing  $\alpha_1$ via \eqref{akplus1} gives $\alpha_1=1/16$.
For  $\alpha=\alpha_1 =1/16$ the optimal weights are
\begin{align*}
&w_3^{\dagger}(\alpha)=w_4^{\dagger}(\alpha)=(1-\alpha)p_3=\frac{1}{8} \\
&w_2^{\dagger}(\alpha)=(1-\alpha)p_2=\frac{1}{4} \\
&w_1^{\dagger}(\alpha)=(1-\alpha)p_1=\frac{1}{2}
\end{align*}
 In this case, the resulting codeword lengths correspond to the optimal Huffman code.
The weights for all $\alpha \in [0,1)$ can be calculated iteratively by calculating $\alpha_k$ for all $k\in \{ 0, 1, 2, 3\}$ and noting that the weights vary linearly with $\alpha$ (Figure \ref{ex1_p}).

\begin{figure}[H]
\centering
\includegraphics[width=0.6\columnwidth]{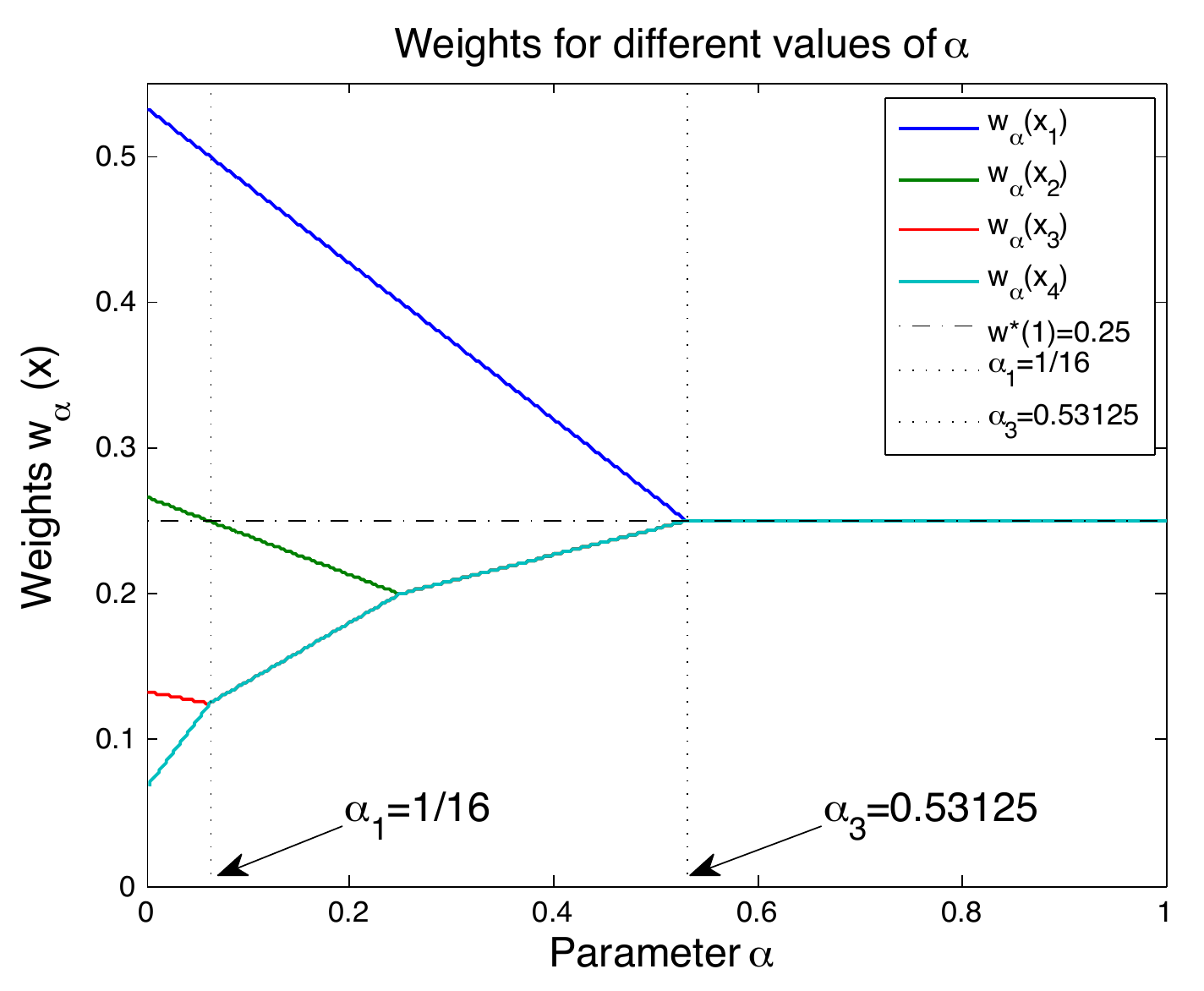}
\caption{A schematic representation of the weights for different values of $\alpha$ when $p=(\frac{8}{15},\frac{4}{15},\frac{2}{15},\frac{1}{15})$.}\label{ex1_p}
\end{figure}

Given the weights, we transformed the problem into a standard average length  coding problem, in which  the optimal codeword lengths can be easily calculated for all $\alpha$'s and they are equal to $ \lceil-\log(w_\alpha(x))\rceil, \forall x \in \mathcal{X}$. The schematic representation of the codeword lengths for $\alpha\in [0,1)$ is shown in Figure \ref{ex1_l}.

\begin{figure}[H]
\centering
\includegraphics[width=0.6\columnwidth]{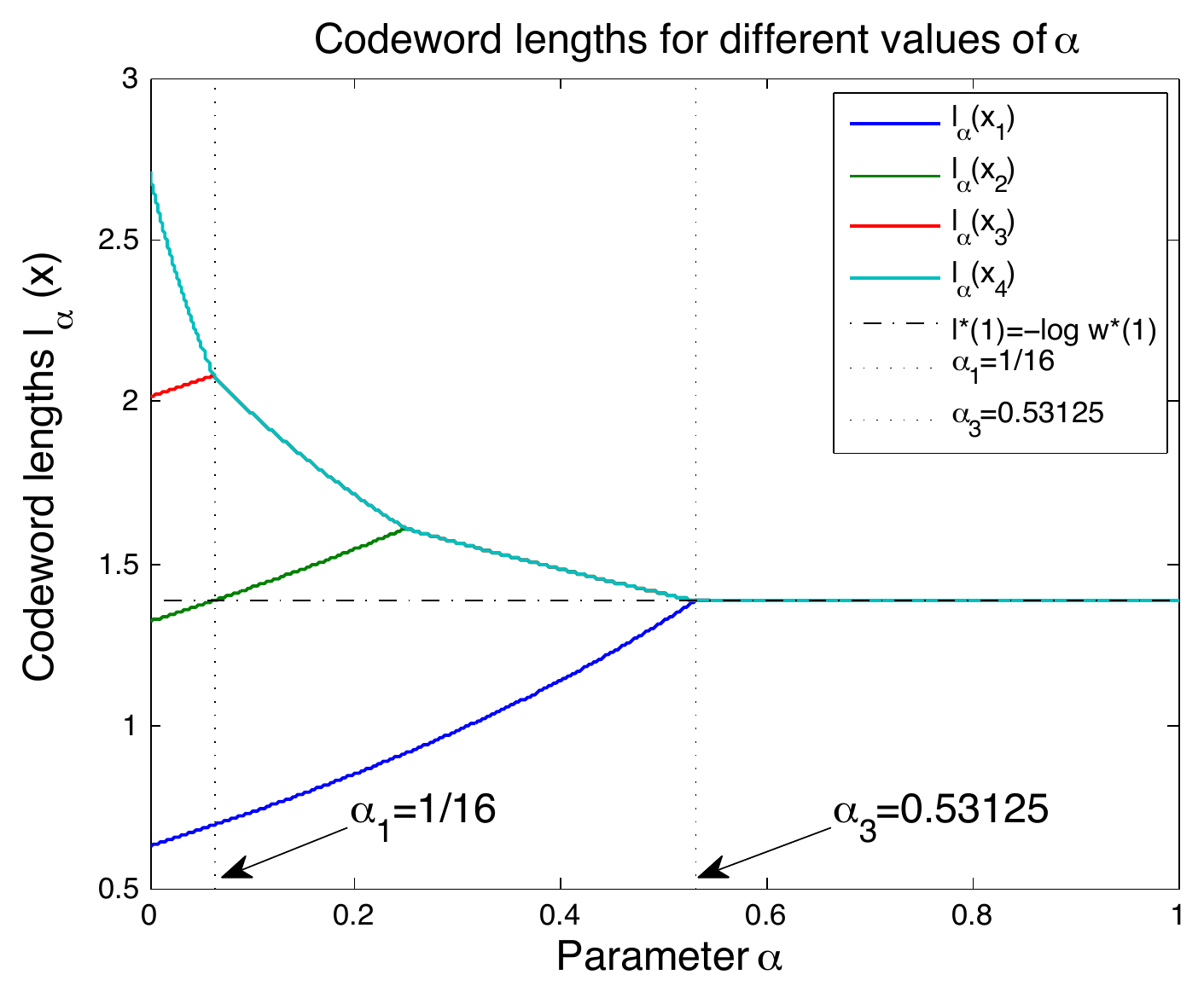}
\caption{A schematic representation of the codeword lengths for different values of $\alpha$ when $p=(\frac{8}{15},\frac{4}{15},\frac{2}{15},\frac{1}{15})$.}\label{ex1_l}
\end{figure}

From Figure \ref{ex1_Lopt} it is verified  that the optimal pay-off function is non-decreasing concave function of $\alpha \in [0,1)$ and at $\alpha_3 = \alpha_{\max} = 0.53125$ the cost function remains unchanged.

\begin{figure}[H]
\centering
\includegraphics[width=0.6\columnwidth]{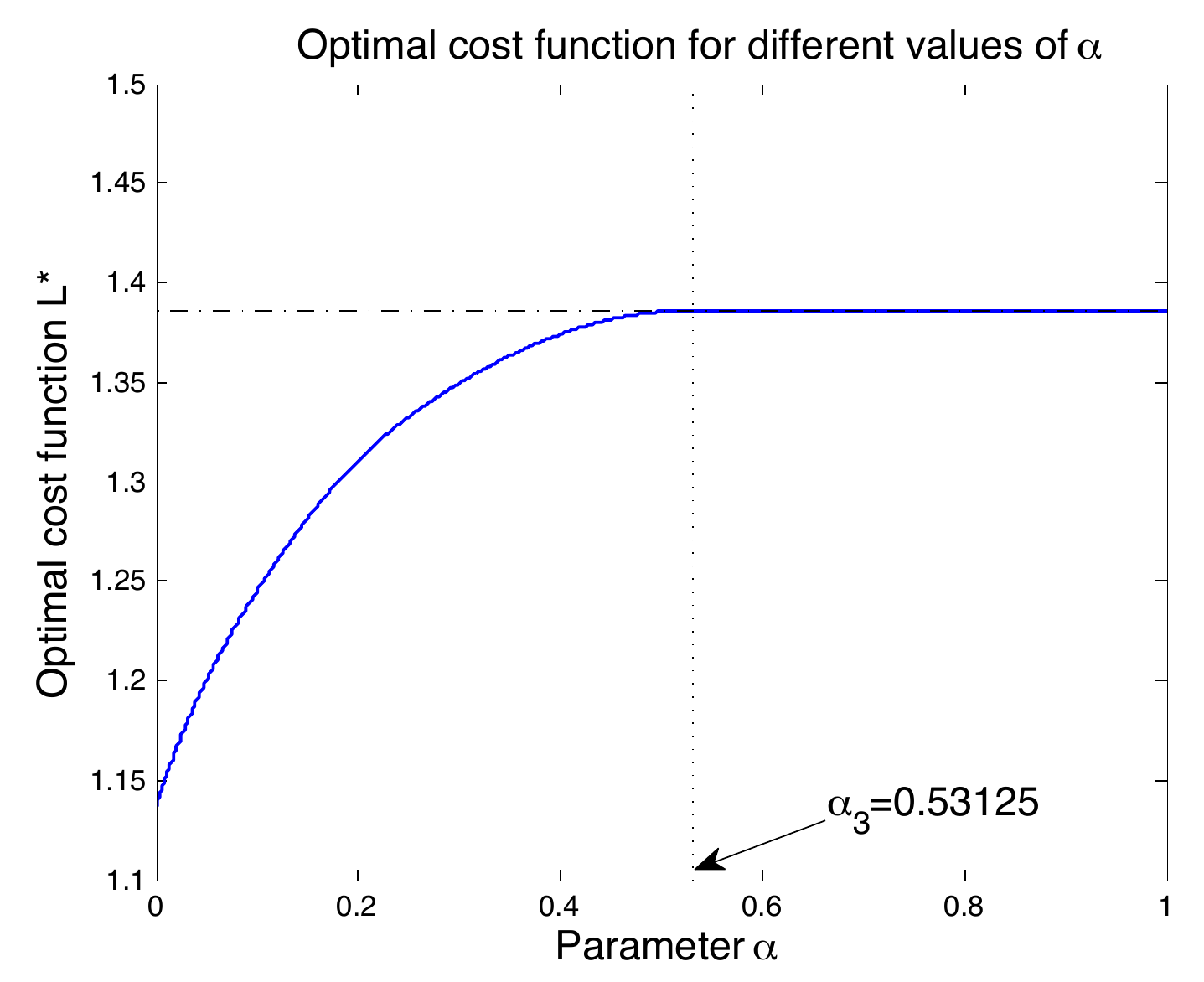}
\caption{A schematic representation of the multiobjective function for different values of $\alpha$ when $p=(\frac{8}{15},\frac{4}{15},\frac{2}{15},\frac{1}{15})$.}\label{ex1_Lopt}
\end{figure}

\subsection{Limited-length coding examples}

\noi Consider binary codewords and a source with $|{\cal X}|=8$ and probability distribution
\begin{align*}
\displaystyle \mathbf{p}=\left(\begin{array}{cccccccc}
  \frac{1}{26} & \frac{1}{26} & \frac{2}{26} &  \frac{2}{26} & \frac{2}{26} & \frac{4}{26} & \frac{5}{26} & \frac{9}{26}
\end{array}\right).
\end{align*}

\noi Using Algorithm \ref{charalambous_limitedLength} one can find the value of $\alpha$ for which the codeword length is  less than or equal to $L_{\lim}$. Hence, the optimal weights $\mathbf{w}^{\dagger}$ and codeword lengths $\mathbf{l}^{\dagger}$ for the given $\alpha$ can be found.\\
Consider, for example, the case  $L_{\lim}=5$; then it can be shown that  $L_{\lim}>-\log (1/26)$ and hence the solution to the problem is the standard Shannon coding with $\alpha=0$. This can also be inferred from Figure \ref{limited2}. Consider the case when the maximum length is $4$ (e.g., $L_{\lim}=4$); then   $\hat{\alpha}=0.0521$ and the optimal lengths are
\begin{align*}
\displaystyle \mathbf{l}^\dagger=\left(\begin{array}{cccccccc}
 1.61 & 2.46 & 2.78 & 3.78 & 3.78 & 3.78 & 4 & 4
\end{array}\right).
\end{align*}

\noi The average codeword length is $2.6355$.

\begin{figure}[H]
\centering
\includegraphics[width=0.7\columnwidth]{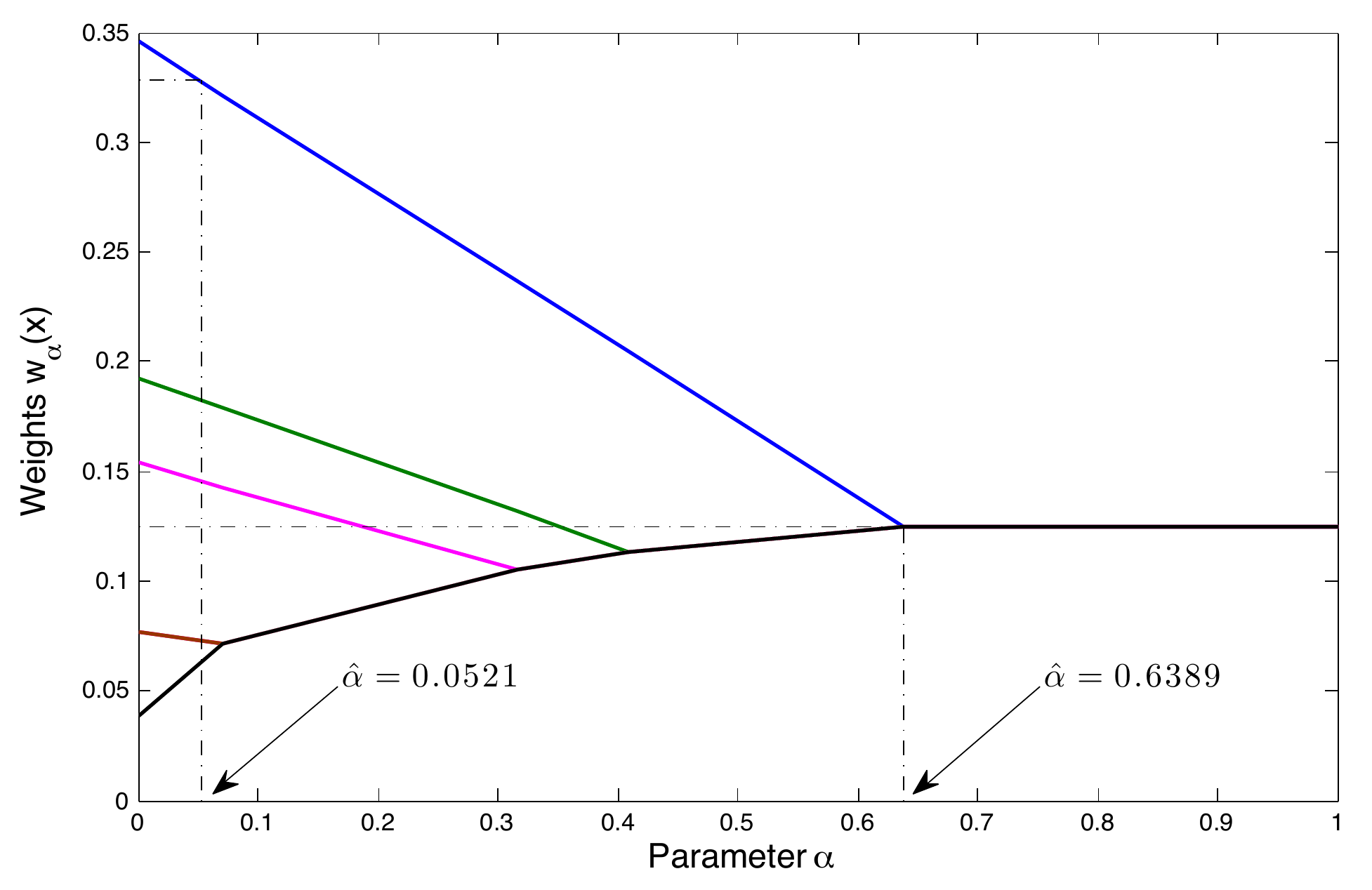}
\caption{A schematic representation of the weights for different values of $\alpha$ when $p=(\frac{1}{26},\frac{1}{26},\frac{2}{26},\frac{2}{26},\frac{2}{26},\frac{4}{26},\frac{5}{26},\frac{9}{26})$.}\label{limited1}
\end{figure}

\noi Consider the case $L_{\lim}=3$; since $|{\cal X}|=8$, there is no compression and all codeword lengths are equal to $3$. In this case,  $\hat{\alpha}=0.6389$, is the minimum $\alpha$ for which there is no compression. This can be seen in Figures \ref{limited1} and \ref{limited2}.

\begin{figure}[H]
\centering
\includegraphics[width=0.7\columnwidth]{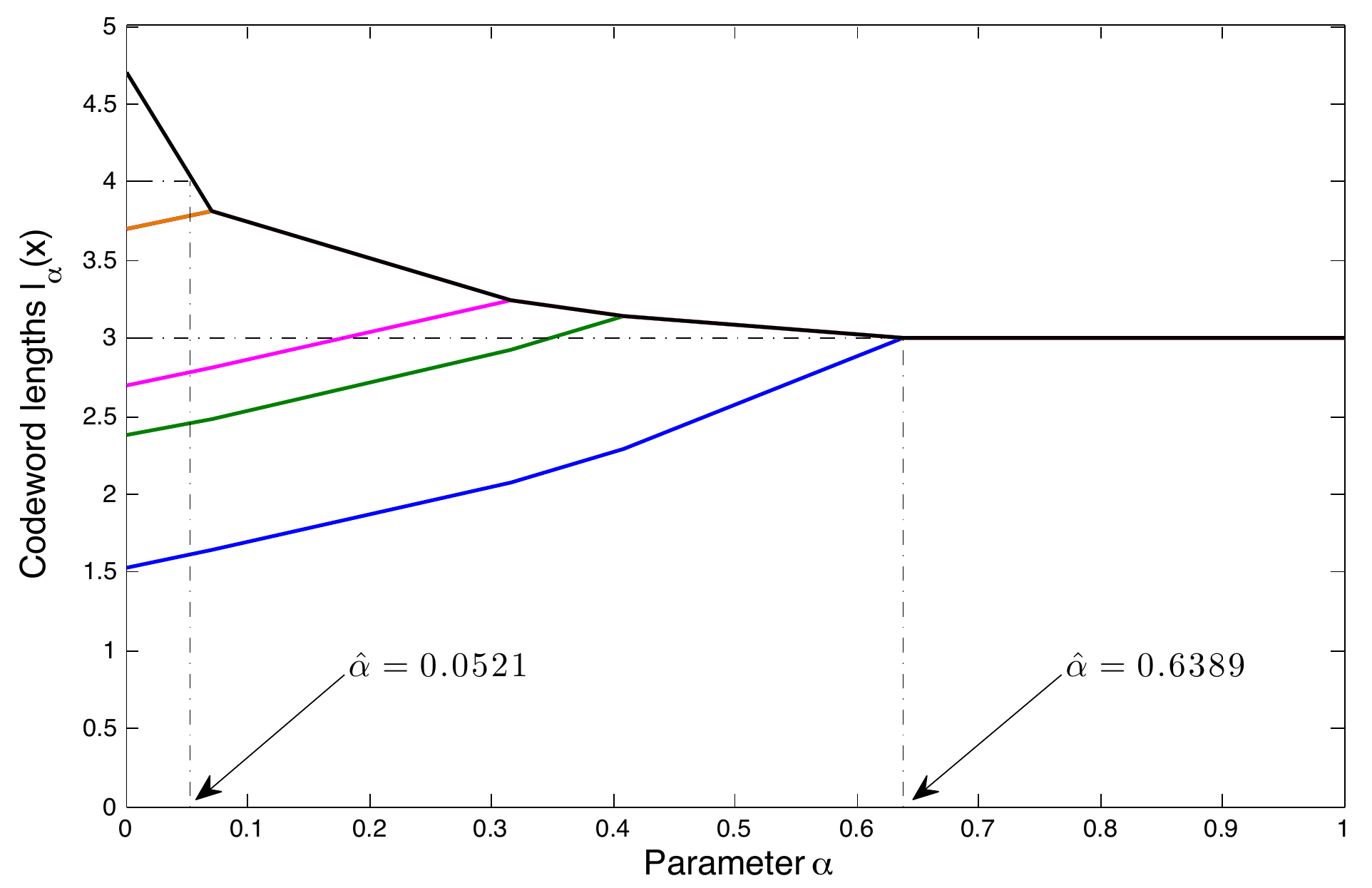}
\caption{A schematic representation of the codeword lengths for different values of $\alpha$ when $p=(\frac{1}{26},\frac{1}{26},\frac{2}{26},\frac{2}{26},\frac{2}{26},\frac{4}{26},\frac{5}{26},\frac{9}{26})$.}\label{limited2}
\end{figure}

Consider the case $L_{\lim}<3$; then there is no $\alpha$ for which the maximum length will be equal $L_{\lim}$.

%
%
%
%
\section{Conclusion and Future Directions}\label{sec:conclusions}

The solution to a lossless coding problem with a pay-off criterion consisting of a convex combination of average and maximum  codeword length is presented. The solution consists of a re-normalization of the initial source probabilities according to a merging  rule. Several properties of the solution are introduced and an algorithm is presented which computes the codeword lengths. The formulation and solution of this problem bridges together an anthology of source coding problems with different pay-offs; relations to problems discussed in the literature are obtained, such as, limited-length coding and, coding with  exponential function of the codeword length. Illustrative examples corroborating the performance of the codes are presented.

The identification of a Huffman-like algorithm which solves the problem using integer-valued codeword lengths is left for future investigation, although it is believed that such an algorithm can be found based on the insight gained in this paper.

%
%
%
%

\appendices

\section{Proofs}

\subsection{Proof of lemma \ref{lemma_prelimin}}\label{proof_lemma_prelimin}

\noi By introducing a real-valued  Lagrange multiplier $\lambda$ associated with the constraint, and using the partition ${\cal X}= {\cal U}\cup {\cal U}^c$,  the augmented pay-off is defined by
\begin{align} \label{Lagrangian_1}
 {\mathbb L}_\alpha({\bf l}, {\bf p}, \lambda) & \tri \alpha l^{*}+(1-\alpha) \sum_{x \in {\cal X}} l(x) p(x) +  \lambda\left(\sum_{x \in {\cal X}}D^{-l(x)}-1\right) \nonumber \\
 &= \sum_{x \in {\cal X}} (\alpha  l^{*} +(1-\alpha)l(x))p(x) +  \lambda\left(\sum_{x \in {\cal X}}D^{-l(x)}-1\right) \nonumber \\
 &= \Big(\alpha  +(1-\alpha) \sum_{x\in {\cal U}} p(x)\Big)l^* + \sum_{ x \in {\cal U}^c} (1-\alpha) p(x)l(x)+ \lambda\left(\sum_{x \in {\cal U}}D^{-l^*} + \sum_{x \in {\cal U}^c}D^{-l(x)}  -1\right).
\end{align}
The augmented pay-off is  a convex and  differentiable function  with respect to ${\bf l}$. Denote the real-valued minimization of \eqref{Lagrangian_1} over ${\bf l}, \lambda$ by  ${\bf l}^\dagger$ and $\lambda^\dagger$. By the Karush-Kuhn-Tucker theorem, the following conditions are necessary and sufficient for  optimality.
 \begin{eqnarray}
\frac{\partial        }{\partial  l(x)}  {\mathbb L}_\alpha({\bf l}, {\bf p}, \lambda) | l_{ {\bf l}={\bf l}^\dagger, \lambda=\lambda^\dagger} &=&0,  \label{lg1} \\
\sum_{x \in {\cal X}}D^{-l^\dagger(x)}  -1\ &\leq& 0,  \label{lg2} \\
\lambda^\dagger \cdot  \left(\sum_{x \in {\cal X}}D^{-l^\dagger(x)}  -1\right)          &=&0, \label{lg3} \\
\lambda^\dagger &\geq& 0. \label{lg4}
\end{eqnarray}
Differentiating with respect to ${\bf l}$, when $x \in {\cal U}$ and  $x \in {\cal U}^c$ the following equations are obtained.
\begin{align}
\frac{\partial        }{\partial  l(x)}  {\mathbb L}_\alpha({\bf l}, {\bf p}, \lambda)  |_{ {\bf l}={\bf l}^\dagger, \lambda=\lambda^\dagger}         &=(1-\alpha)p(x)-\lambda^\dagger D^{-l^\dagger(x)}\log_{e}D=0, \hst  x \in {\cal U}^c  \label{diff1} \\
\frac{\partial        }{\partial  l(x)}  {\mathbb L}_\alpha({\bf l}, {\bf p}, \lambda)  |_{ {\bf l}={\bf l}^\dagger, \lambda=\lambda^\dagger}   &=\alpha \sum_{ x \in {\cal U}^c }p(x) + \sum_{x \in \mathcal{U}}p(x)-\lambda^\dagger |\mathcal{U}| D^{-l^\dagger(x)}\log_{e}D =0, \hst \: x \in {\cal U}. \label{diff2}
\end{align}
When $\lambda^\dagger=0$,   \eqref{diff1} gives  $(1-\alpha)p(x)=0, \forall  x \in {\cal U}^c $. Since $p(x) >0$ then necessarily  $\alpha=1$. This is the case when there is no compression, since ${\cal U}={\cal X}$. For $\alpha \in [0,1)$ then necessarily $\lambda^\dagger >0$. Therefore, by restricting $\alpha \in [0,1)$ then \eqref{diff1}, \eqref{diff2} are equivalent to the following identities.
 \begin{align}\label{li}
D^{-l^\dagger(x)}=\frac{(1-\alpha)p(x)}{\lambda^\dagger \log_{e}D}, \hst  x \in {\cal U}^c,
\end{align}
 \begin{align}\label{lu}
D^{-l^\dagger(x)}=\frac{\alpha \sum_{x \in \mathcal{U}^c}p(x) + \sum_{x \in \mathcal{U}}p(x)}{\lambda^\dagger  |\mathcal{U}| \log_{e}D}, \hst x \in {\cal U}.
\end{align}
Next, $\lambda^\dagger$ is found by substituting \eqref{li} and \eqref{lu} into the Kraft equality to deduce
\begin{align*}
\sum_{x \in {\cal X}   }D^{-l^\dagger(x)} & = \sum_{x \in  \mathcal{U}^c}D^{-l^\dagger(x)}+\sum_{x \in  \mathcal{U}}D^{-l^\dagger(x)} \\
& = \sum_{ x \in {\cal U}^c }\frac{(1-\alpha)p(x)}{\lambda^\dagger  \log_{e}D} + \sum_{x \in  \mathcal{U}}\frac{\alpha \sum_{ x \in {\cal U}^c }p(x) + \sum_{x \in \mathcal{U}}p(x)}{\lambda^\dagger  |\mathcal{U}| \log_{e}D} \\
& =\frac{(1-\alpha) \sum_{ x \in {\cal U}^c }p(x)}{\lambda^\dagger\log_{e}D} +  |\mathcal{U}|  \frac{\alpha \sum_{ x \in {\cal U}^c }p(x) + \sum_{x \in \mathcal{U}}p(x)}{\lambda^\dagger  |\mathcal{U}| \log_{e}D} \\
& = \frac{\sum_{ x \in {\cal U}^c }p(x)+\sum_{x \in \mathcal{U}}p(x)}{\lambda^\dagger\log_{e}D} \\
& = \frac{1}{\lambda^\dagger\log_{e}D} =1 .
\end{align*}
Therefore, $\lambda^\dagger=\frac{1}{\log_{e}D}$.
Substituting $\lambda^\dagger$ into (\ref{li}) and (\ref{lu})  yields
\bes
D^{-l^\dagger(x)}= \left\{ \begin{array}{ll}
(1-\alpha)p(x),  &   x \in {\cal U}^c   \\
\frac{\alpha \sum_{x \in \mathcal{U}^c}p(x) + \sum_{x \in \mathcal{U}}p(x)}{  |\mathcal{U}|}, &  x \in {\cal U}. \end{array} \right.
\ees
Finally, from the previous expression one obtains
\bes
l^\dagger(x) = \left\{ \begin{array}{ll}
-\log{\Big((1-\alpha)p(x)\Big)}, &   x \in {\cal U}^c   \\
-\log{\Big( \frac{\alpha \sum_{x \in \mathcal{U}^c}p(x) + \sum_{x \in \mathcal{U}}p(x)}{  |\mathcal{U}|} \Big)}, & x \in {\cal U} . \end{array} \right.
\ees

\subsection{Proof of Proposition \ref{prop_concavity}}\label{proof_concavity}
Consider the optimal pay-off
\begin{align*}
{\mathbb L}_{\alpha}({\bf l}^\dagger, {\bf p}) \equiv {\mathbb L}({\bf l}^\dagger, {\bf w}_{\alpha}) =  \sum_{ x \in {\cal X}} w_{\alpha}(x) l_{\alpha}^{\dagger}(x) = -\sum_{ x \in {\cal X}} w_{\alpha}(x) \log (w_{\alpha}(x)).
\end{align*}
\noi Differentiating with respect to $\alpha$
\begin{align*}
\frac{ \partial{\mathbb L}({\bf l}^\dagger, {\bf w}_{\alpha}) }{\partial \alpha} &=  -\sum_{ x \in {\cal X}} \frac{\partial \big(w_{\alpha}(x) \log (w_{\alpha}(x))\big)}{\partial \alpha} \\
& =  -\sum_{ x \in {\cal X}} \left( \frac{\partial w_{\alpha}(x)}{\partial \alpha} \log (w_{\alpha}(x)) +   w_{\alpha}(x)\frac{\partial \log (w_{\alpha}(x))}{\partial \alpha}   \right) \\
&=-\sum_{ x \in {\cal X}} \left( w_{\alpha}'(x) \log (w_{\alpha}(x)) +   w_{\alpha}(x)\frac{ w_{\alpha}'(x)}{ w_{\alpha}(x)}\log_D e   \right), \quad \mbox{where } w_{\alpha}'(x)\tri\frac{\partial w_{\alpha}(x)}{\partial \alpha} \\
&=-\sum_{ x \in {\cal X}} w_{\alpha}'(x) \log (w_{\alpha}(x)) -\log_D e \underbrace{\sum_{ x \in {\cal X}}   w_{\alpha}'(x)}_{=0} \\
&=\left(-1+\sum_{ x \in {\cal U}}p(x) \right)\log (w^{*}_{\alpha}) +\sum_{ x \in {\cal U}^c} p(x) \log \big((1-a)p(x)\big)
\end{align*}

\noi Multiplying both sides by $(1-\alpha)$ then
\begin{align*}
(1-\alpha)\frac{ \partial{\mathbb L}({\bf l}^\dagger, {\bf w}_{\alpha}) }{\partial \alpha} &=  -\log (w^{*}_{\alpha})+|{\cal U}|w^{*}_{\alpha}\log (w^{*}_{\alpha})+\sum_{ x \in {\cal U}^c}w_{\alpha}(x)\log \big(w_{\alpha}(x)\big) \\
&=  -\log (w^{*}_{\alpha})+ \sum_{ x \in {\cal X}}w_{\alpha}(x)\log \big(w_{\alpha}(x)\big) \\
&=   \sum_{ x \in {\cal X}}w_{\alpha}(x)\left( -\log (w^{*}_{\alpha})+ \log \big(w_{\alpha}(x)\big)\right) =   \sum_{ x \in {\cal X}}w_{\alpha}(x)\log \left(\frac{w_{\alpha}(x)}{w^{*}_{\alpha}}\right).
\end{align*}

\noi Since ${w_{\alpha}(x)}\geq{w^{*}_{\alpha}}$ then $\log \left(\frac{w_{\alpha}(x)}{w^{*}_{\alpha}}\right)\geq 0$ and therefore, $\frac{ \partial{\mathbb L}({\bf l}^\dagger, {\bf w}_{\alpha}) }{\partial \alpha} \geq 0$. Hence, ${\mathbb L}({\bf l}^\dagger, {\bf w}_\alpha)$ is a non-decreasing function of $\alpha \in [0,1)$. The second derivative of ${\mathbb L}({\bf l}^\dagger, {\bf w}_{\alpha})$ is
\begin{align*}
\frac{ \partial^2{\mathbb L}({\bf l}^\dagger, {\bf w}_{\alpha}) }{\partial \alpha^2} &=\log_D e\left(-1+\sum_{ x \in {\cal U}}p(x) \right)\frac{(w^{*}_{\alpha})'}{w^{*}_{\alpha}} +\log_D e\sum_{ x \in {\cal U}^c} p(x) \left(\frac{-p(x)}{(1-a)p(x)}\right) \\
\frac{1}{\log_D e}\frac{ \partial^2{\mathbb L}({\bf l}^\dagger, {\bf w}_{\alpha}) }{\partial \alpha^2} &= -\frac{\left(1-\sum_{ x \in {\cal U}}p(x) \right)^2}{\alpha +(1-\alpha)\sum_{ x \in {\cal U}}p(x)}-\sum_{ x \in {\cal U}^c} \frac{p(x)}{(1-a)} \leq 0.
\end{align*}
Consequently, ${\mathbb L}({\bf l}^\dagger, {\bf w}_\alpha)$ is a concave non-decreasing function of $\alpha \in [0,1)$.
\noi Note that $\frac{ \partial^2{\mathbb L}({\bf l}^\dagger, {\bf w}_{\alpha}) }{\partial \alpha^2} =\frac{ \partial{\mathbb L}({\bf l}^\dagger, {\bf w}_{\alpha}) }{\partial \alpha} =0$, when $w_{\alpha}(x)=w^{*}_{\alpha}, ~\forall x \in {\cal X}$.


\subsection{Proof of Theorem \ref{mos1}}\label{proof_mos1}

\noi Since the two-parameter pay-off is a convex function and the constraint set is convex, this is  a convex optimization problem. The  augmented pay-off is defined by
\begin{align}
 {\mathbb L}_{t,\alpha}({\bf l}, {\bf p}, \lambda) \tri \alpha  \frac{1}{t}\log \Big(\sum_{ x \in {\cal X}} p(x) D^{t l(x)}\Big)     +(1-\alpha) \sum_{x \in {\cal X}} l(x) p(x) +  \lambda\left(\sum_{x \in {\cal X}}D^{-l(x)}-1\right).  \label{Ln1}
\end{align}
The augmented pay-off is  a convex and  differentiable function  with respect to ${\bf l}$. Denote the real-valued minimization of \eqref{Ln1} over ${\bf l}, \lambda$ by  ${\bf l}^\dagger$ and $\lambda^\dagger$. By the Karush-Kuhn-Tucker theorem, the following conditions are necessary and sufficient for  optimality.
 \begin{eqnarray}
\frac{\partial        }{\partial  l(x)}  {\mathbb L}_{t,\alpha}({\bf l}, {\bf p}, \lambda)  |_{ {\bf l}={\bf l}^\dagger, \lambda=\lambda^\dagger}    &=&0,  \label{lng2} \\
\sum_{x \in {\cal X}}D^{-l^\dagger(x)}   -1\ &\leq& 0,  \label{lng3} \\
\lambda^\dagger \cdot  \left(\sum_{x \in {\cal X}}D^{-l^\dagger(x)}   -1\right)          &=&0, \label{lg3} \\
\lambda^\dagger &\geq& 0. \label{lng4}
\end{eqnarray}

\noi Differentiating (\ref{lng2})  with respect to ${\bf l }$ then

\begin{align}\label{dgiff1}
  \frac{\partial        }{\partial  l(x)}  {\mathbb L}_{t,\alpha}({\bf l}, {\bf p}, \lambda)  |_{ {\bf l}={\bf l}^\dagger, \lambda=\lambda^\dagger}      &   = (1-\alpha) \frac{D^{ t \: l^\dagger(x)} p(x)}{ \sum_{ x \in {\cal X}} p(x) D^{ t \: l^\dagger(x)} } + (1-\alpha )p(x) \nonumber \\
   &-\lambda^\dagger D^{-l^\dagger(x)}\log_{e}D=0, \hst \forall  x \in {\cal X}.
\end{align}

\noi When $\lambda^\dagger=0$, then  \eqref{dgiff1} gives
\begin{align*}
(1-\alpha) \frac{D^{ t \: l^\dagger(x)} p(x)}{ \sum_{ x \in {\cal X}} p(x) D^{ t \: l^\dagger(x)} }  + (1-\alpha )p(x)=0,  \hst  \forall  x \in {\cal X}.
\end{align*}
Summing over ${\cal X}$ then $\alpha +(1-\alpha)=0$, which is impossible, hence  necessarily $\lambda^\dagger >0$.  Consequently, the Kraft inequality holds with equality $\sum_{x \in {\cal X}}D^{-l^\dagger(x)}   =1$. Summing over  ${\cal X}$ on both sides of  (\ref{dgiff1}) gives $\lambda^\dagger=\frac{1}{\log_{e}D}$. Substituting $\lambda^\dagger$ into (\ref{dgiff1}) gives the following set of equations that describe the optimal codeword lengths.

\bea
D^{-l^\dagger(x)}=\alpha \frac{D^{ t \: l^\dagger(x)} p(x)}{ \sum_{ x \in {\cal X}} p(x) D^{ t \: l^\dagger(x)} } + (1-\alpha )p(x), \hst \forall  x \in {\cal X}. \label{gs}
\eea
Consequently, the optimal codeword lengths are given by (\ref{gsl}).\\


\section{Waterfilling-like solution of Problem \ref{problem1}}\label{AppendixB}

\noi The pay-off ${\mathbb L}_{\alpha}^{MO}({\bf l}, {\bf p})$ is a convex combination of the maximum  and the average codeword length. The problem can be expressed as 
\bea
\min_{t}\min_{l} \Big\{ \alpha t + (1-\alpha) \sum_{ x \in {\cal X}} l(x) p(x)\Big\}, \label{b31}
\eea
subject to the Kraft inequality and the constraint $l(x)\leq t$ $\forall x\in {\cal X}$, where $t=\max_{ x\in {\cal X}} l(x)$.

\noi By introducing real-valued  Lagrange multipliers $\lambda(x)$ associated with the constraint $l(x)\leq t$ $\forall x\in {\cal X}$ and a real-valued Lagrange multiplier $\nu$ associate with the Kraft inequality, the augmented pay-off is defined by
\begin{align*} 
 {\mathbb L}_\alpha({\bf l}, {\bf p}, {\bf \lambda}, {\nu}) & \tri \alpha t + (1-\alpha) \sum_{ x \in {\cal X}} l(x) p(x) +  \nu\left(\sum_{x \in {\cal X}}D^{-l(x)}-1\right) +\sum_{x \in {\cal X}}\lambda(x) (l(x)- t)
\end{align*}
The augmented pay-off is  a convex and differentiable function with respect to ${\bf l}$ and $t$. Denote the real-valued minimization over ${\bf l}, t, {\bf \lambda}, \nu$ by  ${\bf l}^\dagger, t^\dagger, {\bf \lambda}^\dagger$ and $\nu^\dagger$. By the Karush-Kuhn-Tucker theorem, the following conditions are necessary and sufficient for  optimality.
 \begin{eqnarray}
\frac{\partial }{\partial  l(x)}  {\mathbb L}_\alpha({\bf l}, {\bf p}, t, {\bf \lambda}, {\nu}) \arrowvert_{ {\bf l}={\bf l}^\dagger, {\bf \lambda}={\bf \lambda}^\dagger, t=t^\dagger, \nu=\nu^\dagger} &=&0,  \label{lg1} \\
\frac{\partial }{\partial  t}  {\mathbb L}_\alpha({\bf l}, {\bf p}, t, {\bf \lambda}, {\nu}) \arrowvert_{ {\bf l}={\bf l}^\dagger, {\bf \lambda}={\bf \lambda}^\dagger, t=t^\dagger, \nu=\nu^\dagger}&=&0,  \label{lg1a} \\
\sum_{x \in {\cal X}}D^{-l^\dagger(x)}  -1\ &\leq& 0,  \label{lg2} \\
\nu^\dagger \cdot  \left(\sum_{x \in {\cal X}}D^{-l^\dagger(x)}  -1\right)          &=&0, \label{lg3} \\
\nu^\dagger &\geq& 0, \label{lg4} \\
 l^\dagger(x)-t &\leq& 0, \forall x\in {\cal X},  \label{lg5} \\
\lambda^\dagger(x) \cdot  \left( l^\dagger(x)-t \right)          &=&0, \forall x\in {\cal X}, \label{lg6} \\
\lambda^\dagger(x) &\geq& 0, \forall x\in {\cal X}. \label{lg7}
\end{eqnarray}
Differentiating with respect to ${\bf l}$, the following equation is obtained:
\begin{align}
\frac{\partial        }{\partial  l(x)}  {\mathbb L}_\alpha({\bf l}, {\bf p}, \lambda, \nu)  |_{ {\bf l}={\bf l}^\dagger, {\bf \lambda}={\bf \lambda}^\dagger, t=t^\dagger, \nu=\nu^\dagger}      &=(1-\alpha)p(x)  -\nu^\dagger D^{-l^\dagger(x)}\log_{e}D+\lambda^\dagger(x)=0,  \label{diff1}
\end{align}
Differentiating with respect to $t$, the following equation is obtained:
\begin{align}
\frac{\partial        }{\partial  t}  {\mathbb L}_\alpha({\bf l}, {\bf p}, \lambda, \nu)  |_{ {\bf l}={\bf l}^\dagger, {\bf \lambda}={\bf \lambda}^\dagger, t=t^\dagger, \nu=\nu^\dagger}  &=  \alpha - \sum_{ x \in {\cal X}}\lambda^\dagger(x) =0. \label{diff2}
\end{align}
When $\nu^\dagger=0$,   \eqref{diff1} gives  $(1-\alpha)p(x)+\lambda^\dagger(x)=0, \forall  x \in {\cal X}$. Since $p(x) >0$ and $\lambda^\dagger(x)\geq 0$ then necessarily  $\alpha=1$. This is the case when there is no compression. For $\alpha \in [0,1)$ then necessarily $\nu^\dagger >0$. 

\noi By restricting $\alpha \in [0,1)$ then \eqref{diff1} and \eqref{diff2} are equivalent to the following identities:
 \begin{align}\label{li}
D^{-l^\dagger(x)}=\frac{(1-\alpha)p(x)+\lambda^\dagger(x)}{\nu^\dagger \log_{e}D}, \hst  x \in {\cal X},
\end{align}
 \begin{align}\label{lambda_x}
\sum_{x \in {\cal X}} \lambda^\dagger(x) =\alpha,
\end{align}
Next, $\nu^\dagger$ is found by substituting \eqref{li} and \eqref{lambda_x} into the Kraft equality to deduce
\begin{align*}
\sum_{x \in {\cal X}   }D^{-l^\dagger(x)} & = \sum_{x \in  \mathcal{X}}\frac{(1-\alpha)p(x)+\lambda^\dagger(x)}{\nu^\dagger  \log_{e}D} \\
& =\frac{(1-\alpha) \sum_{ x \in {\cal X} }p(x)}{\nu^\dagger\log_{e}D} +  \frac{ \sum_{ x \in {\cal X} }\lambda^\dagger(x)}{\nu^\dagger \log_{e}D} \\
& = \frac{(1-\alpha)+ \alpha}{\nu^\dagger\log_{e}D} \\
& = \frac{1}{\nu^\dagger\log_{e}D} =1 .
\end{align*}
Therefore, $\nu^\dagger=\frac{1}{\log_{e}D}$.
Substituting $\nu^\dagger$ into \eqref{li} yields
\bea
D^{-l^\dagger(x)}=(1-\alpha)p(x)+\lambda^\dagger(x),  &   x \in {\cal X},
\eea
Substituting $\lambda^\dagger(x)$ into \eqref{lambda_x} we have
\begin{align}\label{water1}
\sum_{x \in {\cal X}} \big(D^{-l^\dagger(x)}-(1-\alpha)p(x)\big)=\alpha.
\end{align}
Let $w^\dagger(x) \triangleq D^{-l^\dagger(x)}$, i.e., the probabilities that correspond to the codeword lengths $l^\dagger(x)$; also, let  $w^\star \triangleq D^{-t}$. Then, \eqref{water1} can be written as
\begin{align}\label{water2}
\sum_{x \in {\cal X}}\big(w^\dagger(x)-(1-\alpha)p(x)\big)=\alpha.
\end{align}
From the Karush-Kuhn-Tucker conditions \eqref{lg6} and \eqref{lg7} we deduce that for $l(x)<t$, $\lambda(x)=0$ and equation \eqref{water2} becomes
\begin{align}\label{water3}
\sum_{x \in {\cal X}} \big(w^\star-(1-\alpha)p(x)\big)^+=\alpha,
\end{align}
where $(f)^+ = \max(0,f)$. This is the classical waterfilling equation \cite[Section 9.4]{2006:Cover} and $w^\star$ is the water-level chosen, as shown in Figure \ref{waterfilling}.

\begin{figure}[H]
\centering
\includegraphics[width=0.7\columnwidth]{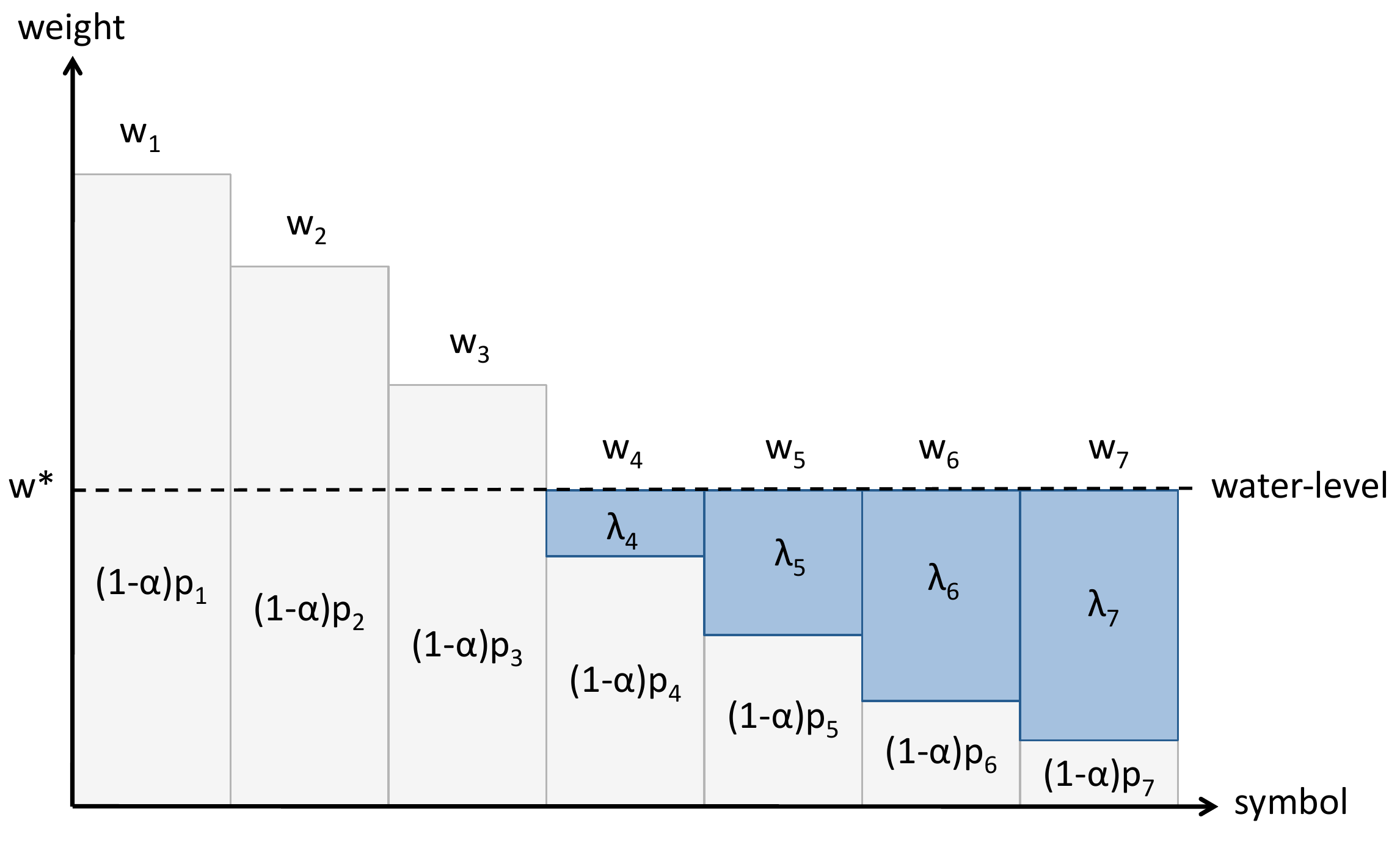}
\caption{Waterfilling solution of the coding problem.}\label{waterfilling}
\end{figure}

\noi For the solution of Problem \ref{problem3a}, for which we consider the limited-length case, $w^\star = D^{-L_{\lim}}$ and equation \eqref{water3} needs to be solved for $\alpha$.


\bibliographystyle{IEEEtran}
\bibliography{bibliografia}

\begin{thebibliography}{10}
\providecommand{\url}[1]{#1}
\csname url@samestyle\endcsname
\providecommand{\newblock}{\relax}
\providecommand{\bibinfo}[2]{#2}
\providecommand{\BIBentrySTDinterwordspacing}{\spaceskip=0pt\relax}
\providecommand{\BIBentryALTinterwordstretchfactor}{4}
\providecommand{\BIBentryALTinterwordspacing}{\spaceskip=\fontdimen2\font plus
\BIBentryALTinterwordstretchfactor\fontdimen3\font minus
  \fontdimen4\font\relax}
\providecommand{\BIBforeignlanguage}[2]{{%
\expandafter\ifx\csname l@#1\endcsname\relax
\typeout{** WARNING: IEEEtran.bst: No hyphenation pattern has been}%
\typeout{** loaded for the language `#1'. Using the pattern for}%
\typeout{** the default language instead.}%
\else
\language=\csname l@#1\endcsname
\fi
#2}}
\providecommand{\BIBdecl}{\relax}
\BIBdecl

\bibitem{2011:charalambousISIT}
T.~Charalambous, C.~D. Charalambous, and F.~Rezaei, ``Lossless coding with
  generalised criteria,'' in \emph{{IEEE} International Symposium on
  Information Theory}, July 2011, pp. 1233--1237.

\bibitem{2006:Cover}
T.~M. Cover and J.~A. Thomas, \emph{Elements of Information Theory},
  2nd~ed.\hskip 1em plus 0.5em minus 0.4em\relax Wiley-Interscience, 2006.

\bibitem{2004:DrmotaSzpankowski}
M.~Drmota and W.~Szpankowski, ``Precise minimax redundancy and regret,''
  \emph{IEEE Transactions of Information Theory}, vol.~50, pp. 2686--2707,
  2004.

\bibitem{2008a:Baer}
M.~Baer, ``Tight bounds on minimum maximum pointwise redundancy,'' in
  \emph{{IEEE} International {S}ymposium on {I}nformation {T}heory}, July 2008,
  pp. 1944 --1948.

\bibitem{1965:campbell_coding}
L.~Campbell, ``A coding theorem and {R}$\acute{e}$nyi's entropy,''
  \emph{Information and Control}, vol.~8, no.~4, pp. 423--429, August 1965.

\bibitem{1981:humblet_generalization}
P.~Humblet, ``Generalization of {H}uffman coding to minimize the probability of
  buffer overflow,'' \emph{{IEEE} Transactions on Information Theory}, vol.~27,
  no.~2, pp. 230--232, 1981.

\bibitem{2008b:Baer}
M.~Baer, ``Optimal prefix codes for infinite alphabets with nonlinear costs,''
  \emph{{IEEE} Transactions on Information Theory}, vol.~54, no.~3, pp. 1273
  --1286, March 2008.

\bibitem{2006a:Baer}
------, ``A general framework for codes involving redundancy minimization,''
  \emph{IEEE Transaction of Information Theory}, vol.~52, pp. 344--349, 2006.

\bibitem{1967:Jelinek}
F.~Jelinek, ``Buffer overflow in variable length coding of fixed rate
  sources,'' \emph{IEEE Transactions on Information Theory}, vol.~14, no.~3,
  pp. 490 -- 501, May 1968.

\bibitem{1991:Merhav}
N.~Merhav, ``Universal coding with minimum probability of codeword length
  overflow,'' \emph{IEEE Transactions on Information Theory}, vol.~37, no.~3,
  pp. 556--563, 1991.

\bibitem{1973:davisson}
L.~Davisson, ``Universal noiseless coding,'' \emph{IEEE Transactions on
  Information Theory}, vol.~19, no.~6, pp. 783--795, Nov 1973.

\bibitem{1980:davisson_Leon-Garcia}
L.~Davisson and A.~Leon-Garcia, ``A source matching approach to finding minimax
  codes,'' \emph{{IEEE} Transactions on Information Theory}, vol.~26, no.~2,
  pp. 166--174, March 1980.

\bibitem{2003:JacquetSzpankowski}
P.~Jacquet and W.~Szpankowski, ``Markov types and minimax redundancy for
  {M}arkov sources,'' \emph{IEEE Transactions on Information Theory}, vol.~50,
  pp. 1393 -- 1402, 2003.

\bibitem{2009:Gawrychowski_Gagie}
P.~Gawrychowski and T.~Gagie, ``Minimax trees in linear time with
  applications,'' in \emph{Combinatorial Algorithms}, J.~Fiala,
  J.~Kratochv\'{\i}l, and M.~Miller, Eds.\hskip 1em plus 0.5em minus
  0.4em\relax Berlin, Heidelberg: Springer-Verlag, 2009, pp. 278--288.

\bibitem{20011:Baer}
M.~Baer, ``Redundancy-related bounds for generalized huffman codes,''
  \emph{{IEEE} Transactions on Information Theory}, vol.~57, no.~4, pp. 2278
  --2290, April 2011.

\bibitem{2010:golinZhang}
M.~Golin and Y.~Zhang, ``A dynamic programming approach to length-limited
  {H}uffman coding: Space reduction with the {M}onge property,'' \emph{IEEE
  Transactions on Information Theory}, vol.~56, no.~8, pp. 3918 -- 3929, August
  2010.

\bibitem{2005:rezaei_bambos}
F.~Rezaei and C.~D. Charalambous, ``Robust coding for uncertain sources: a
  minimax approach,'' in \emph{{IEEE} International Symposium on Information
  Theory, {ISIT}}, 2005, pp. 1539--1543.

\bibitem{1996:Rissanen}
J.~Rissanen, ``Fisher information and stochastic complexity,'' \emph{{IEEE}
  Transactions on Information Theory}, vol.~42, no.~1, pp. 40 --47, January
  1996.

\bibitem{1998:barron}
A.~Barron, J.~Rissanen, and B.~Yu, ``The minimum description length principle
  in coding and modeling,'' \emph{{IEEE} Transactions on Information Theory},
  vol.~44, no.~6, pp. 2743--2760, 1998.

\bibitem{1981:KrichevskyTrofimov}
R.~Krichevsky and V.~Trofimov, ``The performance of universal encoding,''
  \emph{{IEEE} Transactions on Information Theory}, vol.~27, no.~2, pp. 199 --
  207, March 1981.

\bibitem{1987:Shtarkov}
Y.~M. Shtarkov, ``Universal sequential coding of single messages,''
  \emph{Problems of Information Transmission}, vol.~23, no.~3, pp. 175--186,
  1987.

\end{thebibliography}
%




\end{document}